\newcommand{\Exp}[1]{{\text{E}}[ \ensuremath{ #1 } ]  }
\newcommand{\Var}[1]{{\text{Var}}[ \ensuremath{ #1 } ]  }
\newcommand{\bs}[1]{{\boldsymbol #1}}
\newtheorem{theorem}{Theorem}[section]
\newtheorem{lemma}[theorem]{Lemma}
\newtheorem{proposition}{Proposition}[section]
\begin{document}

\title{Adaptive Sign Error Control}
\author{Chaoyu Yu$^{1}$ and  Peter D. Hoff$^{2}$ \\
	$^1$Department of Biostatistics, University of Washington-Seattle \\
	$^2$Department of Statistical Science, Duke University}
\maketitle

\begin{abstract}
%	When testing a null hypothesis $H_0: \theta = 0$, the sign of $\theta$ is usually inferred when a rejection is made. Typically, the type I error rate is explicitly controlled, while the probability of a sign error is not. In this paper, we propose instead to select the level $\alpha$ of a test in order to control its sign error rate. For multiple testing scenarios, 
	
	In multiple testing scenarios, typically the sign of a parameter is inferred when its estimate exceeds some significance threshold in absolute value. Typically, the significance threshold is chosen to control the experimentwise type I error rate, family-wise type I error rate or the false discovery rate. However, controlling these error rates does not explicitly control the sign error rate. 
%The lack of sign error rate control makes the sign estimates unreliable. 
In this paper, we propose two procedures for adaptively selecting an experimentwise significance threshold in order to control the sign error rate. The first controls the sign error rate conservatively, without any distributional assumptions on the parameters of interest. The second is an empirical Bayes procedure, and achieves optimal performance asymptotically when a model for the 
distribution of the parameters is correctly specified. We also discuss an adaptive procedure to minimize the sign error rate when the experimentwise type I error rate is held fixed.

\smallskip
\noindent \textit{Keywords:} false discovery rate, empirical Bayes, hierarchical model, multiple testing. 
\end{abstract}

\section{Introduction}
We consider multiparameter inference for the normal means model,
\begin{equation}
	\boldsymbol{Y} | \boldsymbol{\theta} \sim N(\boldsymbol{\theta}, \boldsymbol{I}),
	\label{m1}
\end{equation}
where $\boldsymbol{Y} = (Y_1,\ldots,Y_m)$ and $\boldsymbol{\theta} = (\theta_1,\ldots, \theta_m)$. Simultaneous inference for $\theta_1,\ldots, \theta_m$ often begins by testing $H_i:\theta_i=0$ 
for each $i =1,\ldots, m$ at level $\alpha$, that is, we reject 
$H_i$ if $|Y_i|$ exceeds the $1-\alpha/2$ standard normal quantile,  $z_{1-\alpha/2}$. This controls 
the experimentwise type I error rate to be equal to $\alpha$. A popular method for choosing $\alpha$ is the Benjamini Hochberg (BH) procedure \citep{benjamini1995controlling}. 
The BH procedure is an adaptive method for selecting a value of 
$\alpha$ that will bound the false discovery rate (FDR), which is defined as $
\text{FDR} = E[\tfrac{V}{R \vee 1}|\theta_1,...,\theta_p]$,
where $R$ is the number of rejections and $V$ is the number of false 
rejections, that is, the number of null hypotheses that are rejected but true. 
There is a large literature on FDR control, see \citet{efron2012large}, \citet{benjamini2010discovering}, \citet{genovese2004stochastic}, \citet{storey2002direct} and \citet{storey2007optimal}. 
%False discovery rate control procedures bound the expected proportion of rejections or ``discoveries'' that are actually nulls. 
However, in many applications it is likely that none of the $\theta_i$'s are truly equal to exactly zero. For example, 
in the case where each $Y_i$ represents a difference in sample averages 
between two treatments, \citet{tukey1991} argued that evaluating if 
$\theta_i=0$ is ``foolish''
since the effects of two different factors are always different, however minutely.  In such cases, \citet{tukey1962} suggests that a more meaningful task is to judge whether or not  there is enough evidence to infer the sign of $\theta_i$, instead of whether or not it is zero. However, if significance tests 
are used in this way, then FDR control is inappropriate since it 
is always zero if there are no true nulls. 
Instead, the relevant error control is not the FDR, but
a sign error rate \citep{gelman2000type, gelman2014beyond, owen2016confidence}.

\citet{Benjamini2005} showed that the Benjamini-Hochberg algorithm can be used to control the pure directional FDR, defined as the expected proportion of discoveries in which a positive parameter is declared negative or a negative parameter is declared positive. We refer to this procedure as the BY procedure in this paper. Some follow-up work includes \citet{zhao2015} who used weighted $p$-value methods, and \citet{guo2010} who extended the idea to making multidimensional directional decisions. \citet{weinstein2013selection} derived new selection-adjusted confidence intervals by minimizing an objective function comprised of the length of the acceptance region and a penalty term for the magnitude of the observation. They showed in examples that these procedures have correct coverage on selected parameters, and have more power to determine the sign, but they did not assess the sign error rate directly. These procedures also do not utilize information across experiments and so are not adaptive.
% and tend to be conservative in terms of the number of signs inferred. 
\citet{Stephens2016} proposed an empirical Bayes procedure for sign error control to gain more power. However, the focus there was control of the local sign error instead of the sign error rate across experiments. 

In the next section, we 
discuss the distribution of the sign error proportion (SEP)
under a hierarchical model for 
the $Y_i$'s and $\theta_i$'s, and relate this to %control of 
a marginal 
sign error rate (MSER). We then propose an adaptive nonparametric procedure that 
controls the MSER below a desired threshold regardless of the 
distribution of the $\theta_i$'s. This procedure is more powerful than BY procedure in terms of the number of rejections made, and therefore in terms of the 
number of signs inferred. The power can 
be further improved if one is willing to assume a parametric model 
for the distribution of the $\theta_i$'s. We show that 
a model-based approach to MSER control can achieve an 
optimal power asymptotically, if a model for the $\theta_i$'s 
is chosen correctly. In Section 3, we numerically compare  the nonparametric procedure
and parametric procedures to the BY procedure and an oracle MSER control procedure 
in a simulation study. 
In Section 4, we discuss an adaptive procedure 
for the somewhat different task of sign inference subject 
to fixed experimentwise type I error rate. We show how the acceptance region
of a level-$\alpha$ test of each $H_i$ may be adaptively chosen 
to minimize the MSER or maximize the power, that is, the number of sign discoveries. A discussion follows in Section 5.

\section{Sign Error Rate Control Procedures}

\subsection{Marginal Sign Error Rate}
We are interested in inferring the sign of each $\theta_i$ in the normal means model in (\ref{m1}). We test $H_i: \theta_i = 0$ using the usual level-$\alpha$ $z$-test,  and estimate $sign(\theta_i)$ by $sign(Y_i)$ if the test rejects and do not estimate the sign otherwise. We use the pair $(R_i, S_i)$ to denote the outcome of this procedure, where $R_i=1$ if $H_i$ is rejected, and $R_i = 0$ otherwise. We use $S_i$ to denote the sign estimate, with possible values 1 (positive), -1 (negative), and 0 (sign not estimated). Note that $S_i = 0$ if $R_i = 0$. A sign error is made if $S_i \cdot sign(\theta_i) = -1$. 
Let $E_i$ be the binary indicator of a sign error, so that $E_i = R_i(1-S_i \cdot sign(\theta_i))/2$. The results across experiments are summarized with $(R, E)$, where $R = \sum_{i=1}^{m} R_i$ is the total number of rejections and $E = \sum_{i=1}^{m} E_i$ is the total number of sign errors among the $m$ experiments. 
In what follows, we assume that none of the $\theta_i$'s are truly equal to 
zero. The properties of our procedures in cases where there are 
some true nulls are discussed in Section 5. 

Define the sign error proportion as $\text{SEP} = E/(R \lor 1)$. Ideally, we want to keep SEP under a desired threshold. Given a data vector $\boldsymbol{Y}$ and a experimentwise significance threshold,  the number of rejections $R$ is known but the number of sign errors $E$ is unknown since each $E_i$ depends on the unknown true parameter $\theta_i$. Therefore, SEP is an unobserved quantity that depends on the data and the unobserved parameter values. 
%Although the true parameter values are unknown, information about their distribution is available. 
However, suppose the empirical distribution of $\theta_1,\ldots, \theta_m$ is well-represented by some distribution $G$, absolutely continuous with respect to 
Lebesgue measure (and so $G(\{ 0\} ) =0$). We then assume the following model:
\begin{equation}
\theta_1,\ldots, \theta_m \;  \sim \; \text{i.i.d.}\;  G.
\label{m2}
\end{equation} 
Now (\ref{m1}) and (\ref{m2}) specify a hierarchical model. Under this hierarchical model, the probability of making a sign error for any one experiment, conditional on rejection, can be written as
\begin{equation}
\text{MSER} = \Pr( E_1=1|R_1=1 ) =\frac{\Pr(E_1=1,R_1=1)}{\Pr(R_1=1)}.
\label{MSER}
\end{equation}
%Note that we only consider the case where the experimentwise error rate $\alpha \neq 0$, thus $\Pr(R_1=1)$ would always be none zero. 
We call the quantity in (\ref{MSER}) the \textit{marginal sign error rate} (MSER). This quantity does not depend on $\boldsymbol{Y}$ or $\boldsymbol{\theta}$, just on $G$ and $\alpha$. It also determines the marginal 
distribution of the SEP:
\begin{lemma}
	Under the hierarchical  model (\ref{m1}) and (\ref{m2}),   
the conditional distribution of $R\cdot \text{SEP}$ given $R=r$ is 
$\text{binomial}(r,\text{MSER})$. 
%conditional on  $R=r$ with $r>0$, SEP has a scaled binomial distribution with mean MSER, $r\cdot SEP \;| R=r \sim Bi(r,MSER)$.
	\label{lemma:bi}
\end{lemma}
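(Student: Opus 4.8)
The plan is to exploit the independence structure induced by the hierarchical model and reduce the claim to a standard multinomial conditioning identity. First I would observe that, since the $\theta_i$ are i.i.d.\ from $G$ by (\ref{m2}) and the $Y_i$ are conditionally independent given $\boldsymbol\theta$ by (\ref{m1}), and since each pair $(R_i,E_i)$ is a fixed measurable function of $(Y_i,\theta_i)$ alone, the pairs $(R_1,E_1),\ldots,(R_m,E_m)$ are i.i.d.\ across experiments. This is the crucial structural fact that makes the aggregate counts tractable.

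Next I would note that $R\cdot\text{SEP}=E$: on the event $\{R=r\}$ with $r\ge 1$ we have $R\vee 1=r$ and hence $R\cdot\text{SEP}=r\cdot E/r=E$, while on $\{R=0\}$ both sides vanish, since no rejections forces $E=0$. So it suffices to identify the conditional law of $E=\sum_i E_i$ given $R=\sum_i R_i=r$.

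Then I would classify each experiment into exactly one of three mutually exclusive outcomes, namely no rejection, rejection with a correct sign, or rejection with a sign error, noting that $E_i=1$ forces $R_i=1$ so the fourth combination is impossible. Writing $p_0=\Pr(R_1=0)$, $p_1=\Pr(R_1=1,E_1=0)$, and $p_2=\Pr(R_1=1,E_1=1)$, the i.i.d.\ property makes the vector of category counts $(N_0,N_1,N_2)$ follow a $\text{Multinomial}(m;p_0,p_1,p_2)$ distribution, with $R=N_1+N_2$ and $E=N_2$. Conditioning on $R=r$ is the same as conditioning on $N_0=m-r$, and the standard collapsing property of the multinomial then gives that $N_2$ on this event is $\text{binomial}(r,\,p_2/(p_1+p_2))$. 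Finally I would identify $p_2/(p_1+p_2)=\Pr(E_1=1,R_1=1)/\Pr(R_1=1)=\text{MSER}$ by (\ref{MSER}), completing the proof.

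There is no real difficulty here, and the computation can equally well be verified directly by forming the ratio $\Pr(E=e,R=r)/\Pr(R=r)$, where both probabilities are explicit multinomial expressions, and watching the $p_0^{m-r}$ factors and the binomial coefficients cancel to leave $\binom{r}{e}\,\text{MSER}^e(1-\text{MSER})^{r-e}$. The only step that warrants care is justifying the i.i.d.\ claim for the pairs $(R_i,E_i)$ and confirming that the single-experiment conditional probability defining MSER in (\ref{MSER}) is exactly the success probability that emerges from the multinomial reduction; once that identification is made, the binomial form is immediate.
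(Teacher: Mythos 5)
Your proof is correct and takes essentially the same route as the paper's: both rest on the observation that $(R_1,E_1),\ldots,(R_m,E_m)$ are i.i.d.\ across experiments under the hierarchical model, so that conditionally on the number of rejections the sign errors among rejected experiments are Bernoulli with success probability $\text{MSER}=\Pr(E_1=1\mid R_1=1)$. The multinomial-conditioning step in your argument simply supplies a rigorous justification for the conditional-binomial claim that the paper's proof asserts directly.
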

From this lemma it follows that $\Exp{\text{SEP}}  = \text{MSER} \cdot \Pr(R>0)  < \text{MSER}$. Thus by controlling MSER to be below a threshold, we bound the expected SEP under this  threshold as well. Moreover, by the following Proposition, in scenarios where $m$ is large, controlling MSER gives an accurate control over SEP.
\begin{proposition}
	 SEP converges to MSER in probability as $m \to \infty$.
	\label{coro:septomser}
\end{proposition}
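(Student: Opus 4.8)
The plan is to exploit the i.i.d.\ structure of the experiments and express $\text{SEP}$ as a ratio of two sample averages, each governed by the weak law of large numbers; the continuous mapping theorem then closes the argument.

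First I would observe that under the hierarchical model (\ref{m1})--(\ref{m2}) the pairs $(Y_i,\theta_i)$ are i.i.d., and hence so are the derived indicator pairs $(R_i,E_i)$. Writing $p_R := \Pr(R_1=1)$ and $p_{ER}:=\Pr(E_1=1,R_1=1)$, the totals $R=\sum_{i=1}^m R_i$ and $E=\sum_{i=1}^m E_i$ are sums of $m$ i.i.d.\ bounded random variables with means $p_R$ and $p_{ER}$ (using $E_1=E_1R_1$), and by definition $\text{MSER}=p_{ER}/p_R$. A preliminary point is that $p_R>0$: since $G$ is absolutely continuous and $Y_1\mid\theta_1\sim N(\theta_1,1)$, the marginal density of $Y_1$ is everywhere positive, so $\Pr(|Y_1|>z_{1-\alpha/2})>0$. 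This makes $\text{MSER}$ well defined and, more importantly, ensures the limiting denominator is nonzero.

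Next I would apply the weak law of large numbers to each average separately, obtaining $R/m \overset{p}{\to} p_R$ and $E/m \overset{p}{\to} p_{ER}$. Rewriting $\text{SEP}=\dfrac{E/m}{(R\vee 1)/m}$ and noting that $R/m\overset{p}{\to}p_R>0$ forces $(R\vee 1)/m\overset{p}{\to}p_R$ as well (the truncation at $1$ is eventually inactive), an application of Slutsky's theorem, or equivalently the continuous mapping theorem for the map $(a,b)\mapsto a/b$, which is continuous at $(p_{ER},p_R)$ because $p_R>0$, yields $\text{SEP}\overset{p}{\to}p_{ER}/p_R=\text{MSER}$.

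I do not anticipate a substantive obstacle here; the only delicate point is the behaviour of the ratio when the denominator can vanish. This is precisely why the $R\vee 1$ truncation and the strict positivity $p_R>0$ matter: together they guarantee that the event $\{R=0\}$ has probability tending to zero and that the limit is evaluated at a point where division is continuous, so the convergence in probability is not disrupted.
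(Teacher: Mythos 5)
Your proof is correct, but it takes a genuinely different route from the paper's. You treat $\text{SEP}$ as a ratio of two i.i.d.\ sample averages: by the weak law of large numbers, $E/m \stackrel{p}{\to} \Pr(E_1=1,R_1=1)$ and $R/m \stackrel{p}{\to} \Pr(R_1=1) > 0$, the truncation $R\vee 1$ is asymptotically immaterial since $|(R\vee 1)/m - R/m|\le 1/m$, and Slutsky (continuous mapping for division at a point with nonzero denominator) gives the result; Lemma \ref{lemma:bi} is never used. The paper instead builds directly on Lemma \ref{lemma:bi}: it writes $\text{SEP}-\text{MSER} = (\text{SEP}-\Exp{\text{SEP}}) + (\Exp{\text{SEP}}-\text{MSER})$, uses the binomial conditional law to get $\Exp{\text{SEP}} = \text{MSER}\cdot\Pr(R>0) \to \text{MSER}$, and then kills the first term by a second-moment (Chebyshev-type) argument, computing $\Var{\text{SEP}}$ via the law of total variance and bounding $\Exp{\tfrac{1}{R}\mathbf{1}(R>0)}$ for binomial $R$. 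Your argument is shorter, more elementary, and self-contained, needing only positivity of $\Pr(R_1=1)$ --- which you correctly verify from the everywhere-positive marginal density of $Y_1$ --- whereas the paper's approach reuses the structure already established in Lemma \ref{lemma:bi} and, as a by-product, yields an explicit variance bound of order $1/m$ (hence $L^2$ convergence and rate information) that your qualitative WLLN argument does not provide. Both proofs are valid; yours trades quantitative detail for simplicity.
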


In the following subsections, we propose two methods to control the MSER under a prespecified level $\alpha_S$. The first method is called the \textit{loose control} procedure, which conservatively controls MSER without parametric assumptions.  The second method is called tight control, which estimates the distribution of the $\theta_i$'s and adaptively chooses an experimentwise type I error rate $\alpha$ to maximize the number of signs estimated while 
controlling  MSER approximately below  level $\alpha_S$.

\subsection{Loose Control Procedure}
In this subsection, we develop a procedure that conservatively controls MSER. It has a good performance in ``spike and slab" scenarios where the sizes of most
of the  $\theta_i$'s are negligible compared to the measurement error, with only a few $\theta_i$'s having large values. However, for other distributions of the $\theta_i$'s it can have an MSER substantially 
below the nominal level, and so 
 we call it the loose control procedure.

The intuition for the loose control procedure is as follows: MSER can be seen as the expected number of sign errors divided by the expected number of signs inferred. With an type I error rate of $\alpha$, in the extreme case where all the $\theta_i$'s are very close to zero, we expect to infer around $\alpha \cdot m$ signs, and expect half of them to be sign errors. Hence the expected number of sign errors will be approximately $\alpha\cdot m/2$. On the other hand, the number of signs we infer is $R$. Thus intuitively we want $(\alpha m/2)/R$ to be smaller than $\alpha_S$, which suggests the following procedure: 
\begin{enumerate}
\item Find the largest $\alpha_{l}$ such that $\alpha_{l} \leq 2\alpha_SR(\alpha_{l})/m.$  
%\begin{equation}
%\alpha_{l} \leq 2\alpha_SR(\alpha_{l})/m.
%\label{eq:lc}
%\end{equation} 
\item Infer the sign for $i$th experiment if $|Y_i|> z_{1-\alpha_{l}/2}$. 
\end{enumerate} 
Here, $R(\alpha_l)$ is the number of rejections made if the 
rejection threshold is $z_{1-\alpha_l/2}$. 
We call this procedure the loose control procedure (LC). It controls MSER asymptotically in $m$: 
%We call this procedure the \textit{loose control} (LC) procedure. 
\begin{proposition} 
For the hierarchical model in  (\ref{m1}) and (\ref{m2}) and using the LC 
procedure, 
MSER $\leq  \alpha_S$ in probability as 
$m\rightarrow \infty$. 
%with $\alpha_l$ specified in (\ref{eq:lc}) controls MSER under $\alpha_S$ almost surely as $m \to \infty$. 
	\label{prop:lc}
\end{proposition}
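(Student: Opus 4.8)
The plan is to split a deterministic, finite-$m$ bound on the per-experiment sign error probability from the stochastic convergence of the empirical rejection count, and then combine the two. Write $p(\alpha) = \Pr(R_1=1) = \Pr(|Y_1| > z_{1-\alpha/2})$ and $q(\alpha) = \Pr(E_1=1,R_1=1)$, so that $\text{MSER}(\alpha) = q(\alpha)/p(\alpha)$ by (\ref{MSER}).

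First I would establish the key inequality $q(\alpha) \le \alpha/2$ for every fixed $\alpha$. Conditioning on $\theta_1$ and using that $\theta_1 \neq 0$ almost surely under $G$, a sign error together with rejection occurs exactly when $Y_1 > z_{1-\alpha/2}$ with $\theta_1 < 0$, or $Y_1 < -z_{1-\alpha/2}$ with $\theta_1 > 0$. Since $Y_1 \mid \theta_1 = t \sim N(t,1)$, the conditional probability of the first event is $\Phi(t - z_{1-\alpha/2})$, which for $t<0$ is strictly below $\Phi(-z_{1-\alpha/2}) = \alpha/2$ by monotonicity of $\Phi$; the second event is bounded identically. Integrating against $G$ and using $G(\{0\})=0$ gives $q(\alpha) \le \tfrac{\alpha}{2}\bigl(G((-\infty,0)) + G((0,\infty))\bigr) = \alpha/2$, hence
\[
\text{MSER}(\alpha) \;\le\; \frac{\alpha/2}{p(\alpha)} \qquad \text{deterministically, for every } \alpha.
\]
This is the step responsible for the conservatism: the bound is tight only when the wrong-signed rejections come from $\theta_i$ near $0$, i.e.\ the spike-and-slab regime.

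Next I would bring in the LC selection rule. By construction $\alpha_l \le 2\alpha_S R(\alpha_l)/m$, equivalently $\frac{\alpha_l/2}{R(\alpha_l)/m} \le \alpha_S$, so combining with the bound above,
\[
\text{MSER}(\alpha_l) \;\le\; \frac{\alpha_l/2}{p(\alpha_l)} \;=\; \frac{\alpha_l/2}{R(\alpha_l)/m}\cdot \frac{R(\alpha_l)/m}{p(\alpha_l)} \;\le\; \alpha_S \cdot \frac{R(\alpha_l)/m}{p(\alpha_l)}.
\]
It therefore remains to show the multiplicative correction $\bigl(R(\alpha_l)/m\bigr)/p(\alpha_l) \to 1$ in probability.

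This last step is the main obstacle, because $\alpha_l$ is data-dependent: for fixed $\alpha$, $R(\alpha)/m$ is an i.i.d.\ average with mean $p(\alpha)$ and converges by the law of large numbers, but a pointwise law of large numbers does not control the ratio at the random $\alpha_l$. I would instead exploit that $\alpha \mapsto R(\alpha)/m$ is monotone with continuous, strictly increasing limit $p(\alpha)$, so a Pólya / Glivenko–Cantelli argument upgrades pointwise convergence to $\sup_\alpha |R(\alpha)/m - p(\alpha)| \to 0$ in probability. Feeding this into the functions $2\alpha_S R(\alpha)/m - \alpha$ and $2\alpha_S p(\alpha) - \alpha$ shows the largest root $\alpha_l$ of the empirical constraint converges to the largest root $\alpha^\ast$ of the population constraint $\alpha = 2\alpha_S p(\alpha)$, provided $2\alpha_S p(\alpha) - \alpha$ is positive just below and negative just above $\alpha^\ast$ (a transversal-crossing, non-degeneracy condition). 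Whenever $G$ carries enough signal for any sign to be inferred one has $\alpha^\ast > 0$, so $p(\alpha_l) \to p(\alpha^\ast) > 0$ stays bounded away from zero; the uniform additive bound then forces $\bigl(R(\alpha_l)/m\bigr)/p(\alpha_l) \to 1$, giving $\text{MSER}(\alpha_l) \le \alpha_S + o_P(1)$. The delicate points are exactly keeping $p(\alpha_l)$ away from $0$ and handling the randomness of $\alpha_l$ through uniformity; in the degenerate regime where no positive threshold satisfies the constraint, $\alpha_l = 0$, no signs are inferred, and the claim holds vacuously.
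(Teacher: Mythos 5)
Your route is genuinely different from the paper's, and its first two steps are correct. The paper proves this proposition indirectly: it first establishes (Proposition \ref{lemma:lcfdr}) that LC controls the unconditional SER below $\alpha_S$ by adapting Benjamini and Yekutieli's Theorem~1 (replacing $kq/m$ by $2kq/m$ in their equation~(4)), and then invokes $\text{SER}-\text{MSER}\to 0$ in probability from the proof of Proposition \ref{coro:septomser}. You instead argue directly: your bound $\Pr(E_1=1,R_1=1)\le \alpha/2$ is exactly the paper's Lemma \ref{lemma:lc} with $s=1/2$ (which the paper uses only for the non-asymptotic NLC result, Proposition \ref{prop:lcsmall}), and you handle the dependence of $\alpha_l$ on $Y_i$ --- precisely the obstruction the paper cites when explaining why LC lacks finite-$m$ guarantees --- via uniform convergence $\sup_\alpha |R(\alpha)/m - p(\alpha)| \to 0$, which is legitimate since $R(\alpha)/m$ is one minus the empirical distribution function of the $|Y_i|$ evaluated at $z_{1-\alpha/2}$ and $|Y_1|$ has a continuous law. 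This self-contained argument is attractive, and in some ways more careful about the randomness of $\alpha_l$ than the paper's own proof.

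The genuine gap is in your last step, keeping $p(\alpha_l)$ bounded away from zero. You obtain this from $\alpha_l \to \alpha^\ast$, the largest root of $\alpha = 2\alpha_S p(\alpha)$, which requires (i) your transversal-crossing condition, which is not among the proposition's hypotheses, and (ii) the claim that $\alpha^\ast>0$ ``whenever $G$ carries enough signal,'' which you never derive from the model. Moreover, your fallback for the degenerate case fails: for large $m$ the LC procedure rejects at least the largest $|Y_i|$ with probability tending to one (the constraint is satisfied at any $\alpha$ slightly above $2(1-\Phi(\max_i|Y_i|))$, which is eventually below $2\alpha_S/m$), so the dangerous regime is $\alpha_l$ positive but tending to $0$, not $\alpha_l = 0$, and it is not vacuous. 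The fix, which removes both extra assumptions: under model (\ref{m2}) we have $G(\{0\})=0$, and for every $\theta\neq 0$ the Gaussian tail ratio gives $\Pr(|Y_1|>z_{1-\alpha/2}\mid\theta)/\alpha \to \infty$ as $\alpha\to 0$, so $p(\alpha)/\alpha\to\infty$ by Fatou. Hence there is $\alpha_0>0$ with $p(\alpha)\ge \alpha/(2\alpha_S)$ for all $\alpha\le\alpha_0$; on the event $\{\alpha_l\le\alpha_0\}$ your first inequality alone gives $\text{MSER}(\alpha_l)\le(\alpha_l/2)/p(\alpha_l)\le\alpha_S$ deterministically, while on the complement $p(\alpha_l)\ge p(\alpha_0)>0$ and your uniform bound yields $R(\alpha_l)/m \le p(\alpha_l)+o_P(1)$, hence $\text{MSER}(\alpha_l)\le \alpha_S\bigl(1+o_P(1)/p(\alpha_0)\bigr)$. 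No convergence of $\alpha_l$ to any particular limit is needed.
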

This procedure does not provide guaranteed control of MSER for finite 
$m$ because in particular the significance threshold for each experiment 
$i$ depends to some extent on $Y_i$ through $R(\alpha_l)$. 
For small $m$ we suggest using the following procedure 
that gives exact, non-asymptotic control of MSER:
\begin{enumerate}
\item For each experiment $i$, find the largest $\alpha_{l}^i$ such that  
  $\alpha_{l}^i \leq 2\alpha_S( (R^{-i}(\alpha_{l}^i)-1)\vee 0)/m$. 
%\begin{equation}
%\alpha_{l}^i \leq 2\alpha_S( (R^{-i}(\alpha_{l}^i)-1)\vee 0)/m.
%\label{eq:lcsmall}
%\end{equation} 
\item Infer the sign for the 
$i$th experiment if $|Y_i|> z_{1-\alpha^i_{l}/2}$.  
\end{enumerate}
Here, $R^{-i}(\alpha_l^i)$ is the number of rejections made 
among all experiments except experiment $i$ if the 
significance threshold is $z_{1-\alpha_l^{i}/2}$. 
%where $R^{-i}(\alpha_{l}^i)$ is the number of rejections among all the experiments except the $i$th experiment, at level $\alpha_{l}^i$. We infer the sign for $i$th experiment if $|Y_i|> z_{1-\alpha^i_{l}/2}$.  
This procedure is slightly more conservative than LC procedure since any $\alpha_l^i$ also satisfies $\alpha_l^i \leq 2\alpha_SR(\alpha_l^i)/m$. We call this procedure the non-asymptotic loose control (NLC) procedure.
\begin{proposition}
 For the hierarchical model in  (\ref{m1}) and (\ref{m2}) and using the non-asymptotic
loose control procedure, $\text{MSER}\leq \alpha_S$. 
%	The loose control procedure with $\alpha_l^i$ specified in (\ref{eq:lcsmall}) controls MSER under $\alpha_S$. 
	\label{prop:lcsmall}
\end{proposition}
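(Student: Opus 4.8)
The plan is to reduce the claim to a deterministic counting inequality, using the leave-one-out construction to make the per-experiment error bounds exact. First I would record that, because $(Y_i,\theta_i)$ are i.i.d.\ and the NLC procedure treats the experiments symmetrically, the pairs $(R_i,E_i)$ are exchangeable; hence $\text{MSER}=\Pr(E_1=1\mid R_1=1)=\Exp{E_1}/\Exp{R_1}=\Exp{E}/\Exp{R}$, so it suffices to prove $\Exp{E}\le\alpha_S\,\Exp{R}$. The feature that makes non-asymptotic control possible is that $\alpha_l^i$ is defined through $R^{-i}$, the rejection count over the experiments other than $i$; thus $\alpha_l^i$ is a function of $\boldsymbol{Y}_{-i}=\{Y_j:j\ne i\}$ alone and is therefore independent of $(Y_i,\theta_i)$.

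Next I would bound $\Exp{E_i}$ by conditioning on $\boldsymbol{Y}_{-i}$. Given $\boldsymbol{Y}_{-i}$, the threshold $\alpha_l^i$ and the count $R^{-i}(\alpha_l^i)$ are fixed, while $Y_i\mid\theta_i\sim N(\theta_i,1)$ is unaffected. For any $\theta_i>0$ a sign error requires $Y_i<-z_{1-\alpha_l^i/2}$, an event of probability $\Phi(-z_{1-\alpha_l^i/2}-\theta_i)\le\Phi(-z_{1-\alpha_l^i/2})=\alpha_l^i/2$, and symmetrically for $\theta_i<0$; since $G(\{0\})=0$ we may ignore $\theta_i=0$, so $\Exp{E_i\mid\boldsymbol{Y}_{-i}}\le\alpha_l^i/2$. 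The defining inequality for $\alpha_l^i$ then gives $\alpha_l^i/2\le(\alpha_S/m)\,((R^{-i}(\alpha_l^i)-1)\vee 0)$. Summing over $i$,
\begin{equation}
\Exp{E}=\sum_{i=1}^m\Exp{E_i}\le\frac{\alpha_S}{m}\,\Exp{\sum_{i=1}^m\big((R^{-i}(\alpha_l^i)-1)\vee 0\big)}.\nonumber
\end{equation}

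It remains to show the purely combinatorial inequality $\sum_{i=1}^m\big((R^{-i}(\alpha_l^i)-1)\vee 0\big)\le mR$ for every realization of $\boldsymbol{Y}$, after which $\Exp{E}\le\alpha_S\Exp{R}$ follows immediately. This is the step I expect to be the main obstacle. The intended route is to order the experiments so that $|Y_1|\ge\cdots\ge|Y_m|$ and to first establish a monotonicity property: since $R^{-i}(\alpha)=R(\alpha)-\mathbf{1}[|Y_i|>z_{1-\alpha/2}]$, a larger $|Y_i|$ yields a pointwise smaller defining function and hence a smaller fixed point, so the cutoffs $c_i=z_{1-\alpha_l^i/2}$ are nondecreasing in $i$. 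One then reads $R^{-i}(\alpha_l^i)$ as the number of other experiments whose observations exceed $c_i$ and must argue that the $-1$ correction, together with the self-consistency of the fixed points, trims the double sum $\sum_i R^{-i}(\alpha_l^i)=\sum_j\#\{i\ne j:c_i<|Y_j|\}$ down to at most $mR$. The delicate point is that an individual term $(R^{-i}(\alpha_l^i)-1)\vee 0$ can exceed $R$, so the bound cannot be obtained term by term; it must come from the global cancellation forced by the coupled fixed-point equations, and a naive configuration that would violate it is ruled out precisely because the cutoffs $c_j$ are themselves pinned down by the same leave-one-out equations. This is exactly the role played by the analogous self-consistency correction in the leave-one-out proofs of FDR control for the BH and BY procedures, and I would model the argument on those.
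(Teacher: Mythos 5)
Your reduction via exchangeability to $\Exp{E}\le\alpha_S\Exp{R}$, the observation that $\alpha_l^i$ is a function of $\bs Y_{-i}$ alone, and the conditional bound $\Pr(E_i=1\mid \bs Y_{-i})\le \alpha_l^i/2\le (\alpha_S/m)\,((R^{-i}(\alpha_l^i)-1)\vee 0)$ are all correct, and they are exactly the opening steps of the paper's proof (the paper invokes Lemma \ref{lemma:lc} with $s=1/2$ where you use a direct Gaussian tail bound). The gap is the step you yourself flag as the main obstacle: the pointwise inequality $\sum_{i=1}^m ((R^{-i}(\alpha_l^i)-1)\vee 0)\le mR$ is not merely hard to prove, it is false, so no BH/BY-style cancellation argument can complete your plan. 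Take $m=3$, $\alpha_S<1/2$, $|Y_1|,|Y_2|$ very large, and $|Y_3|<z_{1-\alpha_S/3}$. Any feasible $\alpha$ for experiment $i$ satisfies $\alpha\le 2\alpha_S((R^{-i}(\alpha)-1)\vee 0)/3\le 2\alpha_S/3$, and for $\alpha\le 2\alpha_S/3$ we have $z_{1-\alpha/2}\ge z_{1-\alpha_S/3}>|Y_3|$; hence for $i\in\{1,2\}$, $R^{-i}(\alpha)\le 1$, so $(R^{-i}(\alpha)-1)\vee 0=0$ and $\alpha_l^1=\alpha_l^2=0$ (experiments 1 and 2 are never rejected), while $R^{-3}(\alpha)=2$ for all relevant $\alpha$ gives $\alpha_l^3=2\alpha_S/3$ but experiment 3 fails its own cutoff. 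Thus $R=0$ while $\sum_i((R^{-i}(\alpha_l^i)-1)\vee 0)=1$. The monotonicity you correctly identify (larger $|Y_i|$ forces a smaller $\alpha_l^i$, i.e., a higher cutoff) is precisely what manufactures this configuration rather than what rules it out.

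You should also know that the paper's own proof breaks at the same spot: its displayed chain ends with the unjustified pointwise claim $((R(\alpha_l^i)-1)\vee 0)\le R^t$, which fails in the configuration above; that claim holds only on the event $\{R_i=1\}$ (if $|Y_i|$ and $|Y_j|$ both exceed $z_{1-\alpha_l^i/2}$ then $R^{-j}(\alpha_l^i)=R^{-i}(\alpha_l^i)$, so $\alpha_l^j\ge\alpha_l^i$ and $j$ is rejected as well, giving $R\ge R(\alpha_l^i)$). Worse, the defect is not repairable, because the proposition is false as stated: with $m=3$ and $G=N(0,\sigma^2)$ for small $\sigma$, the NLC procedure rejects experiment $i$ exactly when all three $|Y_j|$ exceed $z_{1-\alpha_S/3}$, and conditionally on rejection a sign error occurs with probability tending to $1/2$ as $\sigma\to 0$; hence $\text{MSER}=\Pr(E_1=1\mid R_1=1)\to 1/2>\alpha_S$. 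Since MSER is a ratio of expectations, concentrating $G$ near zero makes the rare realizations that produce rejections carry error rate $1/2$, and no data-dependent threshold can avoid this. What your conditioning argument does prove is the expectation-of-ratio guarantee: on $\{E_i=1\}\subseteq\{R_i=1\}$ one has $R\ge R^{-i}(\alpha_l^i)+1$, so $\Exp{E/(R\vee 1)}\le\sum_i \Exp{\Pr(E_i=1\mid \bs Y_{-i})/(R^{-i}(\alpha_l^i)+1)}\le\alpha_S$; that is the SER-type control of Proposition \ref{lemma:lcfdr}, a strictly weaker statement than the MSER control claimed here.
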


%The loose control procedure above can be simplified when $m$ is large. Denote the total number of rejections at level $\alpha$ as $R^t(\alpha)$. When $m$ is very large, $((R^{-i}(\alpha)-1)\vee 0)/m$ should be very close to $R^t(\alpha)/m$. In fact, $|((R^{-i}(\alpha)-1)\vee 0)/m - R^t(\alpha)/m| < 2/m \to 0$ as $m\to \infty$. Therefore, when $m$ is large, we can use the following simple procedure instead: Let $s=1/2$, and 

These loose control procedures are closely related to the Benjamini Yekutieli (BY) \citep{Benjamini2005} procedure, which is equivalent to finding the maximal $\alpha_{by}$ such that $\alpha_{by} \leq \alpha_SR(\alpha_{by})/m$. It is easy to see that $\alpha_{by}$ is always smaller than $\alpha_{l}$. Hence the LC procedure always infers more signs than the BY procedure. The BY procedure was proposed for controlling the 
unconditional sign error rate $\text{SER}= \text{E}[\text{SEP}|\boldsymbol{\theta}]$, 
which they called the 
``pure directional FDR''. 
%which is defined as $\text{E}[E/(R \vee 1)|\boldsymbol{\theta}]$. 
In the case that there are no true nulls, 
%The following Lemma shows that 
the loose control procedure also controls SER:
\begin{proposition}  
%	Using $\alpha_{l}$ obtained from (\ref{eq:lc}) as the level of experimentwise type I error rate, the pure directional FDR is controlled under $\alpha_S$. 
If $\theta_i\neq 0$ for all $i\in \{1,\ldots, m\}$ then both the LC and NLC procedures control the SER below $\alpha_S$. 
	\label{lemma:lcfdr}
\end{proposition}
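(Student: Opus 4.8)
The plan is to condition throughout on $\boldsymbol{\theta}$, so that by model (\ref{m1}) the coordinates $Y_i$ are independent, and to observe that $\text{SER}=\text{E}[E/(R\vee 1)\mid\boldsymbol{\theta}]$ is exactly the conditional directional FDR. The whole argument rests on one elementary per-experiment fact: if $\theta_i\neq0$ and we reject at two-sided level $\alpha$, a sign error requires $Y_i$ to land in the tail opposite to $\text{sign}(\theta_i)$ beyond $z_{1-\alpha/2}$, so that
\[
\Pr(E_i=1\mid\boldsymbol{\theta})=\Phi\bigl(-z_{1-\alpha/2}-|\theta_i|\bigr)\le\Phi(-z_{1-\alpha/2})=\alpha/2 .
\]
Thus each experiment behaves like a ``directional null'' whose wrong-sign probability is super-uniform with slope $1/2$ rather than $1$; this factor of $1/2$ is precisely what pays for the factor of $2$ separating the LC threshold from the BY threshold.

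For the LC procedure I would first rewrite it in p-value form. Setting $p_i=2\Phi(-|Y_i|)$, the LC threshold $\alpha_l$ solving $\alpha_l=2\alpha_S R(\alpha_l)/m$ is exactly the Benjamini--Hochberg threshold at level $q=2\alpha_S$ applied to the independent $p_i$. To track sign errors I would introduce the one-sided statistic $g_i=2\Phi(\text{sign}(\theta_i)\,Y_i)$, so that $E_i=\mathbb{1}[g_i\le\alpha_l]$; a sign error forces a rejection and, on that event, $g_i=p_i$, while the display above reads $\Pr(g_i\le t\mid\theta_i)\le t/2$. I would then run the standard independent-case self-consistency computation
\[
\text{SER}=\sum_{i=1}^m\sum_{k=1}^m\frac1k\,\Pr\bigl(g_i\le 2\alpha_S k/m,\;R=k\mid\boldsymbol{\theta}\bigr),
\]
in which each summand is handled exactly as in the usual BH proof, except that the slope-$1/2$ super-uniformity of $g_i$ replaces the slope-$1$ uniformity of an ordinary null p-value. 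The net effect is to multiply the usual bound by $1/2$, giving $\text{SER}\le\tfrac12\,q=\tfrac12(2\alpha_S)=\alpha_S$.

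For the NLC procedure I would instead exploit the leave-one-out construction, paralleling the proof of Proposition \ref{prop:lcsmall} but now conditioning on $\boldsymbol{\theta}$. Since the threshold $\alpha_l^i$ depends only on $\{Y_j:j\neq i\}$, it is independent of $Y_i$ given $\boldsymbol{\theta}$, so the key bound yields $\Pr(E_i=1\mid\boldsymbol{Y}_{-i},\boldsymbol{\theta})\le\alpha_l^i/2=\alpha_S\,[(R^{-i}(\alpha_l^i)-1)\vee0]/m$. The buffering ``$-1$'' is exactly what lets one replace the random denominator $R$ by the leave-one-out count, as in Proposition \ref{prop:lcsmall}, so that $\text{E}[E_i/(R\vee1)\mid\boldsymbol{\theta}]\le\alpha_S/m$; summing over $i$ then gives $\text{SER}\le\alpha_S$.

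The main obstacle is the bookkeeping that links the two-sided rejection count $R$, which defines the adaptive threshold, to the one-sided wrong-sign events, which define $E$. Concretely, for LC one must check that the slope-$1/2$ factor propagates cleanly through the self-consistency sum even though $R$ is computed from the $p_i$ while $E_i$ is governed by $g_i$; and for NLC the delicate point is controlling the denominator on $\{E_i=1\}$ via the leave-one-out count, which is exactly where the ``$-1$'' in the NLC definition is needed. Once these monotonicity and independence facts are in place, the cancellation of the factor $2$ against the factor $1/2$ is immediate and the bound $\alpha_S$ follows for both procedures.
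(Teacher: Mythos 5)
Your proposal is correct, but it takes a genuinely different route from the paper. The paper's own proof is essentially a citation: it invokes Theorem 1 and Corollary 3 of Benjamini and Yekutieli (2005), notes that LC is the BH procedure run at level $2\alpha_S$ (so $kq/m$ becomes $2kq/m$ in their equation (4)), and then disposes of NLC with the remark that it is ``more conservative'' than LC. Your LC argument reconstructs, from scratch and in the independent-normal no-true-null setting, exactly the content of that citation: the wrong-sign statistic $g_i=2\Phi(\mathrm{sign}(\theta_i)Y_i)$ is super-uniform with slope $1/2$, LC is BH at level $q=2\alpha_S$, and the standard self-consistency computation yields $\mathrm{SER}\le q/2=\alpha_S$; the $2\times\tfrac{1}{2}$ cancellation you highlight is precisely the factor-of-2 modification the paper makes to BY's proof. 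Where your route genuinely improves on the paper is NLC: shrinking a rejection set does not in general preserve control of an FDP-type ratio (numerator and denominator both shrink), so the paper's ``more conservative'' step is not by itself a valid deduction, whereas your direct per-experiment bound $\mathrm{E}[E_i/(R\vee 1)\mid\boldsymbol{\theta}]\le\alpha_S/m$ sidesteps it entirely. The two bookkeeping facts you flag but leave unverified do hold: (i) for LC, on $\{g_i\le 2\alpha_S k/m,\,R=k\}$ one has $g_i=p_i$, and $R$ may be replaced by the rejection count with $p_i$ set to zero, which is independent of $Y_i$ given $\boldsymbol{\theta}$, so the usual BH computation goes through with slope $1/2$; (ii) for NLC, on $\{p_i\le\alpha_l^i\}$ every $j\ne i$ with $p_j\le\alpha_l^i$ satisfies $\alpha_l^j\ge\alpha_l^i$, because $i$'s own rejection compensates for dropping $j$ in the leave-$j$-out count, giving $R^{-j}(\alpha_l^i)\ge R^{-i}(\alpha_l^i)$; hence $R\ge 1+R^{-i}(\alpha_l^i)$ on that event, and the bound $\alpha_S/m$ follows from $\bigl((r-1)\vee 0\bigr)/(1+r)\le 1$. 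One small correction to your accounting: it is this last ratio inequality, not the replacement of the random denominator by the leave-one-out count, that consumes the ``$-1$'' buffer; the denominator replacement is paid for by $i$'s own rejection.
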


%This procedure is closely related to the Benjamini Hochberg (BH) procedure. In fact, \citet{Benjamini2005} has proposed to use the BH procedure for mixed directional FDR control, which is equivalent to the expectation of SEP given $\boldsymbol{\theta}$. We call this the BY procedure. They proved that BY procedure could control the mixed directional FDR under a prespecified level $\alpha_S$. Now we show LC procedure is more powerful in inferring signs than BY procedure. For each group, we can calculate a $p$ value. Suppose the order statistics of $p$-values are $p_{(1)}, p_{(2)},\ldots, p_{(m)}$. Then it is easy to see that the loose control procedure for large $m$ is equivalent to finding the $\alpha_{l}$ that equals to the largest $p$-value such that 
%\begin{equation}
%p_{(i)} \leq 2\alpha_S R/m.
%\end{equation}
%This is different from BY procedure by a constant of 2 on the right side, meaning that it tends to declare more signs at the same prespecified level $\alpha_S$. Loose control procedure is simple, and it provides control over the MSER without further assumptions on the distribution of $\theta$. 

\subsection{Model Based Control Procedure}
Although the loose control procedure controls MSER without assumptions on $G$, it can be  conservative in cases where $G$ does not resemble a spike and slab distribution. In this subsection, we propose a model-based MSER control procedure that can be more powerful in terms of the number of sign inferred. 

We first discuss the oracle situation where the probability density function $G$ of $\theta_i$'s is known. The acceptance region of our test of $H_i$ is $A(\alpha) = \{Y_i: \Phi^{-1}(\alpha/2)<Y_i< \Phi^{-1}(1-\alpha/2) \}$, with $\Phi$ being the standard normal cumulative density function. We can write MSER as a function of $\alpha$ as follows:
\begin{equation}
\begin{split}
\text{MSER}(\alpha) &= \frac{\Pr(E_1 = 1, R_1=1)}{\Pr(R_1=1)}   \\
& = \frac{\text{E}_G [\text{P}_{\theta_1}(E_1 = 1, R_1=1)] }{ \text{E}_G [\text{P}_{\theta_1}(R_1=1)] } \\
& =  \frac{\text{E}_G [\text{P}_{\theta_1}( S_1 = -1 , R_1 = 1, sign(\theta_1) = 1) + \text{P}_{\theta_1}( S_1 = 1 , R_1 = 1, sign(\theta_1) = -1)] }{ \text{E}_G [\text{P}_{\theta_1}(Y_1 \not \in A(\alpha) )] } \\
& =  \frac{\text{E}_G [\text{P}_{\theta_1}( Y_1 < 0 , Y_1 \not \in A(\alpha))\textbf{1}(\theta_1 > 0) +\text{P}_{\theta_1}( Y_1 > 0 , Y_1 \not \in A(\alpha))\textbf{1}(\theta_1 < 0)]  }{ \text{E}_G [\text{P}_{\theta_1}(Y_1 \not \in A(\alpha) )] }  \\
& = \frac{\text{E}_G[B_1\textbf{1}(\theta_1 > 0)+B_2\textbf{1}(\theta_1 < 0)]}{ \text{E}_G[B_1 + B_2]},
\end{split} \label{eq:MSER}
\end{equation}
where $B_1 =\Phi(\Phi^{-1}(\alpha/2)-\theta)$ and $B_2=\Phi(\Phi^{-1}(\alpha/2)+\theta)$. 
%\begin{equation}
%\begin{split}
%& B_1 =\Phi(\Phi^{-1}(\alpha/2)-\theta),\\
%&B_2=\Phi(\Phi^{-1}(\alpha/2)+\theta) .
%\end{split} \label{eq:B}
%\end{equation}
In this case, we need to find the value of $\alpha$ such that
$\text{MSER}(\alpha) = \alpha_S$.
We denote this $\alpha$ as $\alpha_o$, and call the resulting procedure the tight control oracle (TCO) procedure. This procedure maximizes the power in inferring signs while keeping MSER at $\alpha_S$. 

In practice, $G$ is unknown and must be estimated from the data. Suppose we have an estimate $\hat G$ of $G$. By replacing $G$ by $\hat G$ in (\ref{eq:MSER}) we can obtain an empirical estimate $\widehat{\text{MSER}}$ for each value of $\alpha$, 
and in particular, find an $\alpha_e$ such that 
$\widehat{\text{MSER}}(\alpha_e) = \alpha_S$. 
We call the procedure using $\alpha_e$ instead of $\alpha_o$ the tight control empirical (TCE) procedure. 

The task of estimating $G$ from $\bs Y$ based on (\ref{m1}) and (\ref{m2}) is known as deconvolution. Current nonparametric deconvolution techniques are computationally expensive, and converge to the true $G$ slowly in $m$, 
yielding unstable results for small $m$. 
As an alternative to nonparametric deconvolution, we propose using simple parametric models to facilitate the application of the TCE procedure. The following proposition shows that under certain assumptions, the TCE procedure converges to 
the optimal TCO procedure when a correct 
parametric model for the $\theta_i$'s is used.

\begin{proposition}
  \label{prop:tcetotco}
	Suppose $\theta_1,\ldots, \theta_m\sim$ i.i.d.\ $G_\eta$ where $G_\eta$ is a member of a parametric family of distributions indexed by a finite-dimensional parameter vector $\eta$ with density function continuous in $\eta$. 
For each $m$ let $\hat \eta$ be an estimate of $\eta$, and let 
$\widehat{\text{MSER}}(\alpha)$ be the plug-in estimate of 
$\text{MSER}(\alpha)$ calculated using $G_{\hat \eta}$. 
If $\hat \eta  \stackrel{p}\rightarrow \eta$ as $m\rightarrow \infty$, then 
$\widehat{\text{MSER}}(\alpha) \stackrel{p}\rightarrow 
\text{MSER}(\alpha)$ 
%calculated using the true $g_\eta$  
and $\alpha_e \stackrel{p}\rightarrow \alpha_o$ 
as  $m\rightarrow \infty$. 
\end{proposition}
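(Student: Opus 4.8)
The plan is to treat $\text{MSER}(\alpha)$ as a ratio of two expectations of bounded functions of $\theta$ and to push the convergence $\hat\eta\to\eta$ through this ratio by the continuous mapping theorem, then transfer convergence of the function to convergence of its root. Write $\text{MSER}(\alpha)=N(\alpha,\eta)/D(\alpha,\eta)$, where, reading off (\ref{eq:MSER}),
$$
N(\alpha,\eta)=\text{E}_{G_\eta}\!\left[B_1\mathbf{1}(\theta>0)+B_2\mathbf{1}(\theta<0)\right],
\qquad
D(\alpha,\eta)=\text{E}_{G_\eta}[B_1+B_2],
$$
with $B_1,B_2$ the quantities defined just after (\ref{eq:MSER}); the plug-in estimate is $\widehat{\text{MSER}}(\alpha)=N(\alpha,\hat\eta)/D(\alpha,\hat\eta)$.

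First I would establish that, for each fixed $\alpha\in(0,1)$, the maps $\eta\mapsto N(\alpha,\eta)$ and $\eta\mapsto D(\alpha,\eta)$ are continuous. The key observation is that $B_1$ and $B_2$ are standard normal tail probabilities and hence lie in $[0,1]$, so the integrands are uniformly bounded and do not depend on $\eta$. Since the density of $G_\eta$ is continuous in $\eta$, any sequence $\eta_k\to\eta$ yields densities converging pointwise; as these are all probability densities, Scheff\'e's lemma gives $L^1$ convergence, and boundedness of $B_1,B_2$ then forces $N(\alpha,\eta_k)\to N(\alpha,\eta)$ and likewise for $D$. Next I note that $D(\alpha,\eta)=\Pr(R_1=1)>0$ for every $\alpha\in(0,1)$, so the ratio is continuous in $\eta$ at the true value and the denominator is bounded away from zero. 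The continuous mapping theorem applied to $\hat\eta\stackrel{p}\rightarrow\eta$ then gives $\widehat{\text{MSER}}(\alpha)\stackrel{p}\rightarrow\text{MSER}(\alpha)$, which is the first claim.

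For the second claim I would use the strict monotonicity of $\alpha\mapsto\text{MSER}(\alpha)$, which is what makes $\alpha_o$ the unique, well-separated solution of $\text{MSER}(\alpha)=\alpha_S$. Fixing $\epsilon>0$, monotonicity gives $\text{MSER}(\alpha_o-\epsilon)<\alpha_S<\text{MSER}(\alpha_o+\epsilon)$. Applying the already-proved pointwise convergence at the two points $\alpha_o\pm\epsilon$, with probability tending to one we have $\widehat{\text{MSER}}(\alpha_o-\epsilon)<\alpha_S<\widehat{\text{MSER}}(\alpha_o+\epsilon)$. Since $\widehat{\text{MSER}}$ is itself an MSER function for the distribution $G_{\hat\eta}$ and hence also monotone in $\alpha$, its root $\alpha_e$ must lie in $(\alpha_o-\epsilon,\alpha_o+\epsilon)$; thus $\Pr(|\alpha_e-\alpha_o|<\epsilon)\to1$, giving $\alpha_e\stackrel{p}\rightarrow\alpha_o$. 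Note that only pointwise convergence at two points, together with monotonicity of both functions, is needed here, so no uniform-convergence argument is required.

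The main obstacle is precisely this passage from convergence of the function to convergence of its root: pointwise (or even uniform) convergence of $\widehat{\text{MSER}}$ alone does not control $\alpha_e$ unless the limiting root is isolated, and it is strict monotonicity of $\text{MSER}(\alpha)$ that supplies this. I would therefore want to confirm that $\text{MSER}(\alpha)$ is strictly increasing --- intuitively, lowering the rejection threshold admits borderline observations whose signs are less reliable, so the error-to-rejection ratio rises --- by differentiating the ratio $N/D$ in $\alpha$ and signing the resulting covariance-type expression; establishing this cleanly for a general $G_\eta$ is the part requiring the most care. The continuity argument for $N$ and $D$ in $\eta$ is comparatively routine, with Scheff\'e's lemma removing any need for a dominating envelope.
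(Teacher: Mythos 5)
Your proposal is correct in outline and follows essentially the same route as the paper, but it is self-contained where the paper defers to assertion and citation, so a short comparison is worth giving. For the first claim, the paper simply asserts that $\text{E}_{G_\eta}[B_1+B_2]$ and the corresponding numerator are continuous in $\eta$ (because $B_1,B_2$ are integrable and $g_\eta$ is continuous in $\eta$) and then applies the continuous mapping theorem; your Scheff\'e-lemma argument supplies the justification the paper omits, exploiting exactly the right structure, namely that $B_1,B_2\in[0,1]$ do not depend on $\eta$, so pointwise convergence of the densities upgrades to $L^1$ convergence and the integrals converge without any dominating envelope. For the second claim, the paper invokes M-estimator theory, citing Lemma 5.10 of \citet{van1998asymptotic}; your sandwich argument --- pointwise convergence at $\alpha_o\pm\epsilon$ together with monotonicity of the limiting and plug-in MSER functions --- is precisely the content of that lemma specialized to this setting, so you have effectively inlined the citation. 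The one caveat is that both arguments rest on the same unverified structural fact: the paper asserts, without proof, that $\alpha_o$ and $\alpha_e$ are the \emph{unique} roots of $\text{MSER}(\alpha)=\alpha_S$ and $\widehat{\text{MSER}}(\alpha)=\alpha_S$ (and Lemma 5.10 additionally requires the limit to change sign strictly at $\alpha_o$, i.e.\ a well-separated root), while you require strict monotonicity of $\alpha\mapsto\text{MSER}(\alpha)$, which implies both; you explicitly flag that you have not established this and that it is the delicate step. So your write-up is honest about a gap that the paper's proof silently shares rather than introducing a new one. If strict monotonicity of the plug-in function proves awkward, note that continuity of $\widehat{\text{MSER}}(\cdot)$ in $\alpha$ (by dominated convergence) together with uniqueness of its root would let you replace monotonicity by an intermediate-value argument, which is exactly the ``continuous with exactly one zero'' alternative hypothesis of Lemma 5.10.
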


One useful model for $G$ that we explore in the next section is the 
family of 
%Specifically, we model the distribution of the $\theta_i$'s as an 
asymmetric Laplace distributions \citep{yu2005three}, which have probability density functions of the form
\begin{equation*}
	g(\theta; \mu,\tau,q) = \frac{q(1-q)}{\tau}\exp\big( -\frac{(x-\mu)}{\tau}[q - I(x\leq\mu)]  \big),
\end{equation*}
where $\mu \in \mathbb{R}$ is the location parameter, $\tau>0$ is the scale parameter, and $0< q <1$ is the skew parameter. Figure \ref{fig:ald_dist} shows the shape of ALD distributions for $q\in \{0.1, 0.3, 0.5\}$. 

\begin{figure}[!h]
	\centering
	\includegraphics[width=0.5\linewidth]{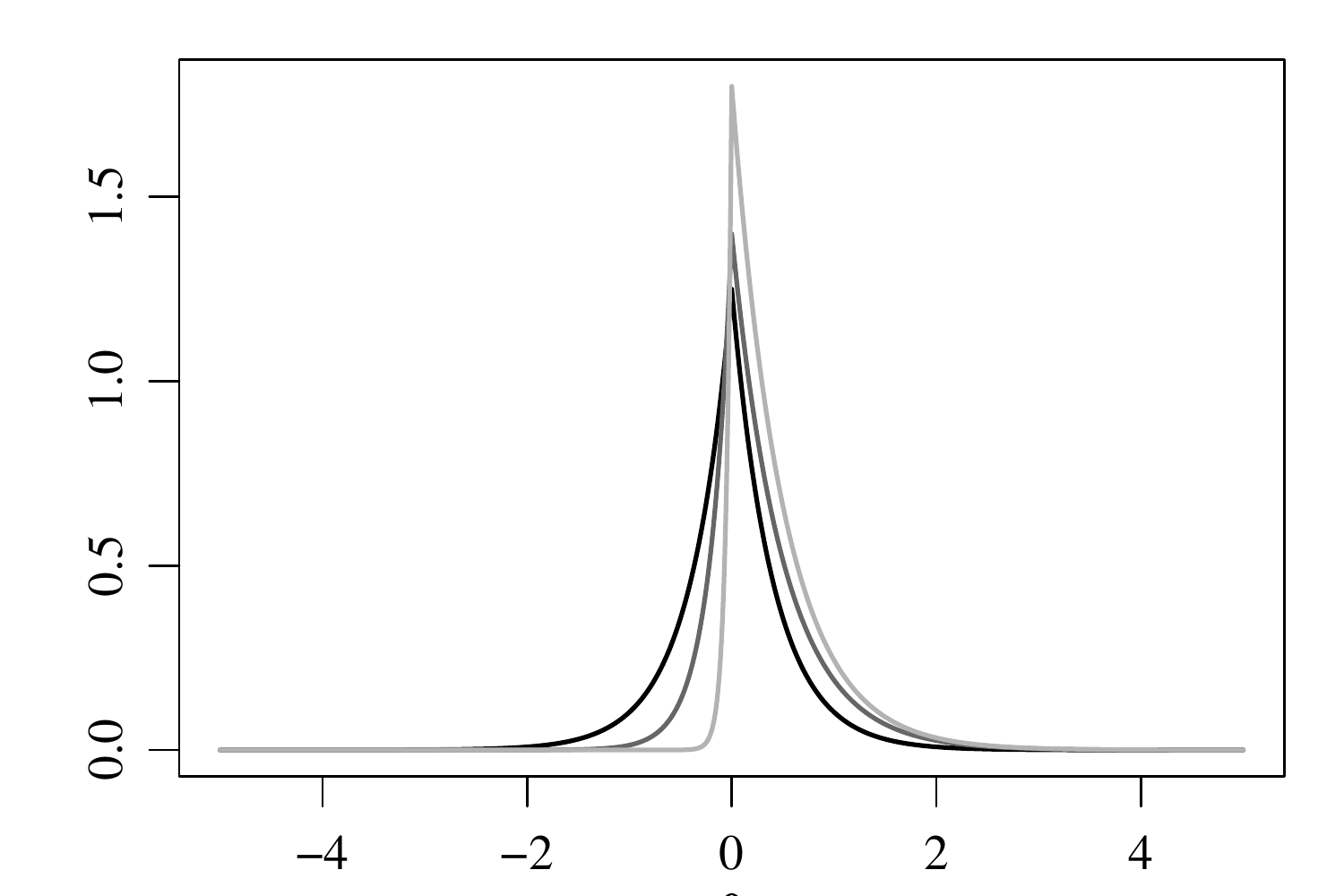}
	\caption{Shapes of asymmetric Laplace densities. The black line is the ALD density when $q = 0.5$ and $\tau = 0.2$, the darker grey line is for $q = 0.3$ and $\tau = 0.15$, and the lightest grey line is for $q=0.1$ and $\tau=0.05$. }
	\label{fig:ald_dist}
\end{figure} 
The asymmetric Laplace distribution is a flexible model for unimodal distributions with the Laplace distribution being a special case. It is more peaked at zero than a normal distribution, but also can reflect the potential skewness of 
the distribution of true effects that often exists in applications, for example, in cases where more $\theta_i$'s are positive than negative, or vice versa. 
%Its wide applicability in financial data has already been well established  \citep{kozubowski2000asymmetric}. 
For multiple testing problems were we expect that most $\theta_i$'s are
close to zero, it is natural to consider only submodels where $\mu=0$. 
%In applications, we usually assume the mode of $G$ is at 0, thus $\mu=0$. 
In this case, 
method of moment estimates for the scale and skew parameters 
may be obtained from the first and second sample moments of $\bs Y$. 
Under the hierarchical model, we have 
\begin{equation*}
	\Exp{Y} = \Exp{\Exp{Y|\theta}} = \Exp{\theta} = \frac{\tau (1-2q)}{q(1-q)},
\end{equation*}
\begin{equation*}
	\Var{Y} = \Exp{\Var{Y|\theta}} + \Var{\Exp{Y|\theta}} = 1 + \Var{\theta} = 1 + 
	\frac{\tau^2(1-2q+2q^2)}{(1-q)^2q^2}.
\end{equation*}
By setting
\begin{equation*}
	\begin{split}
		&\frac{1}{m}\sum Y_i = \frac{\tau (1-2\hat q)}{\hat q(1-\hat q)}, \\
		&  \frac{1}{m-1} \sum (Y_i - \bar y) = 1 + 
		\frac{\hat \tau^2(1-2\hat q+2\hat q^2)}{(1-\hat q)^2\hat q^2},
	\end{split}
\end{equation*}
we can solve for $\hat q$ and $\hat \tau$ to  obtain moment-based estimates of $q$ and $\tau$. %Note that although skewness is usually related to the third moment of a distribution, for this parametric model we only need the first and second moments to estimate the skew parameter $q$, making the estimation fast and stable. 

\section{Simulation Studies}
In this section we use several simulation scenarios to compare the performance of Benjamini and Yekutieli's procedure (BY), the loose control  procedure (LC), 
and a tight control empirical procedure using an asymmetric Laplace model
  for the $\theta_i$'s (TCEA). 
For each simulation scenario, 1000 datasets were simulated as follows:
First, values  $\theta_1,\ldots, \theta_m$ were independently simulated 
from a  distribution $G$. 
%distribution with probability density function $g(\theta)$. 
Then an observation vector  $\boldsymbol{Y}$ was sampled from a $N(\boldsymbol{\theta},\boldsymbol{I})$ distribution. 
%We use the three procedures to control the sign error proportion. 
For each of these datasets, the sign error proportions 
and the total numbers of signs inferred by each procedure were calculated. 
%The average actual sign error proportions and the average numbers of signs inferred using the three procedures are then compared to each other and to the oracle procedure TCO, which is optimal. 
For all procedures and simulation scenarios  the target level $\alpha_S$  was set to be 10\%.
Simulations were run for $q \in \{0.1, 0.3, 0.5\}$ and for five different values of $\tau$ for each level of $q$. The ranges of the $\tau$ values were chosen so that SEP ranged between   $10 \%$ to $30\%$ when the  experimentwise type I error rate $\alpha = 0.05$.

The results 
for several simulation scenarios with $m=5000$ 
are summarized in Figure \ref{fig:ald}. Overall, the TCEA procedure performs nearly as well as the TCO procedure. Both procedures control SEP at the prespecified level $\alpha_S = 0.1$, and infer many more signs than the BY and LC procedures, with BY being the least powerful of the three. The difference between TCEA and LC or BY becomes larger as $\tau$ increases. 

\begin{figure}[!h]
	\centering
	\includegraphics[width=0.95\linewidth]{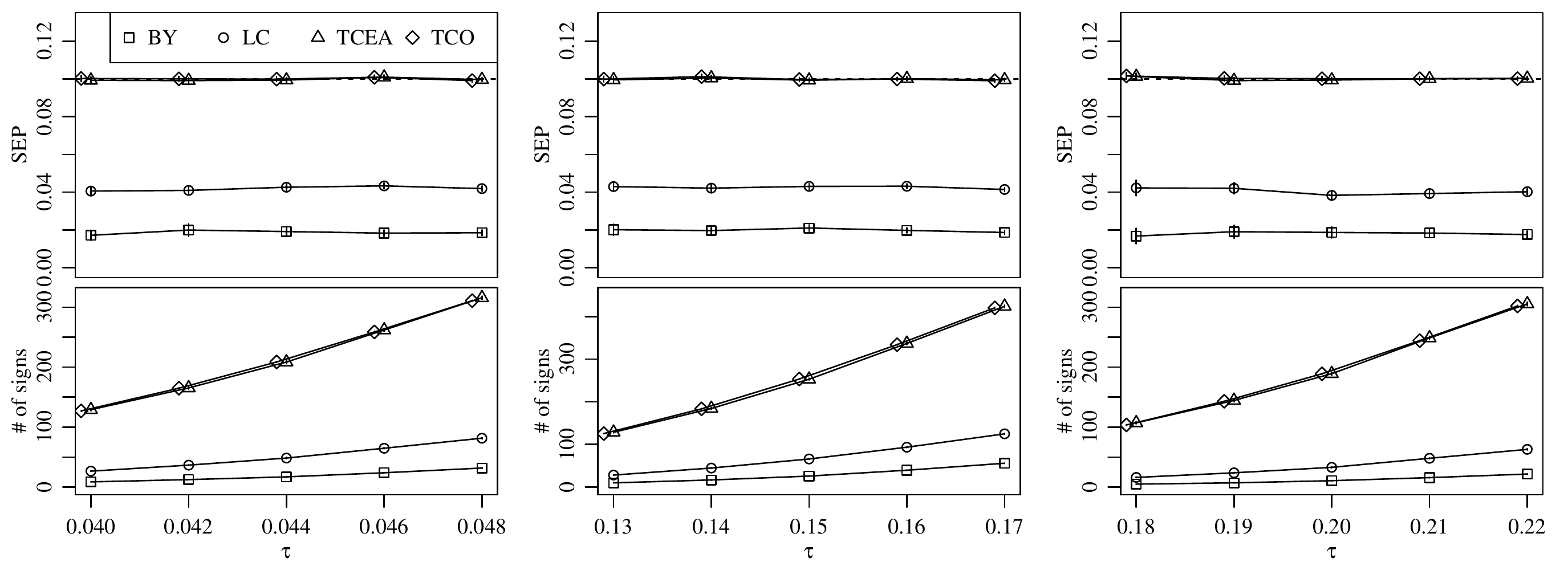}
	\caption{Comparison of the three procedures when $m=5000$ and the $\theta_i$'s have an asymmetric Laplace  distribution.  The skewness parameter $q$ is set to be 0.1 in the left column, 0.3 in the middle column, and 0.5 in the right column. Vertical bars around each plotting character correspond to $\pm 1.96$ Monte Carlo standard errors. }
	\label{fig:ald}
\end{figure}

When number of experiments is large, the TCEA procedure is very close to the TCO procedure as our asymptotic result predicts. However, when $m = 100$, TCEA and TCO show some differences. The results for several simulations with $m=100$ are summarized in Figure \ref{fig:aldss}. 
In this situation, %we see that TCEA is less stable than TCO in controlling the SEP. While 
TCEA still performs better than BY or LC in terms of the power to infer signs. 
Also, we see that for some cases, the SEP of the oracle procedure does not attain the nominal level of 0.1. This is because tight control procedure is designed to keep MSER under the nominal level $\alpha_s$. As illustrated before, controlling MSER under $\alpha_S$ gives an accurate control over the expected SEP when $m$ is large. When $m$ is small,  the probability of making no rejections across all experiments is non-negligible, and MSER is slightly larger than expectation of SEP. In this case, instead of keeping the average SEP at $\alpha_S$, TCO keeps it under $\alpha_S$, making the result slightly conservative. 

\begin{figure}[!h]
	\centering
	\includegraphics[width=0.95\linewidth]{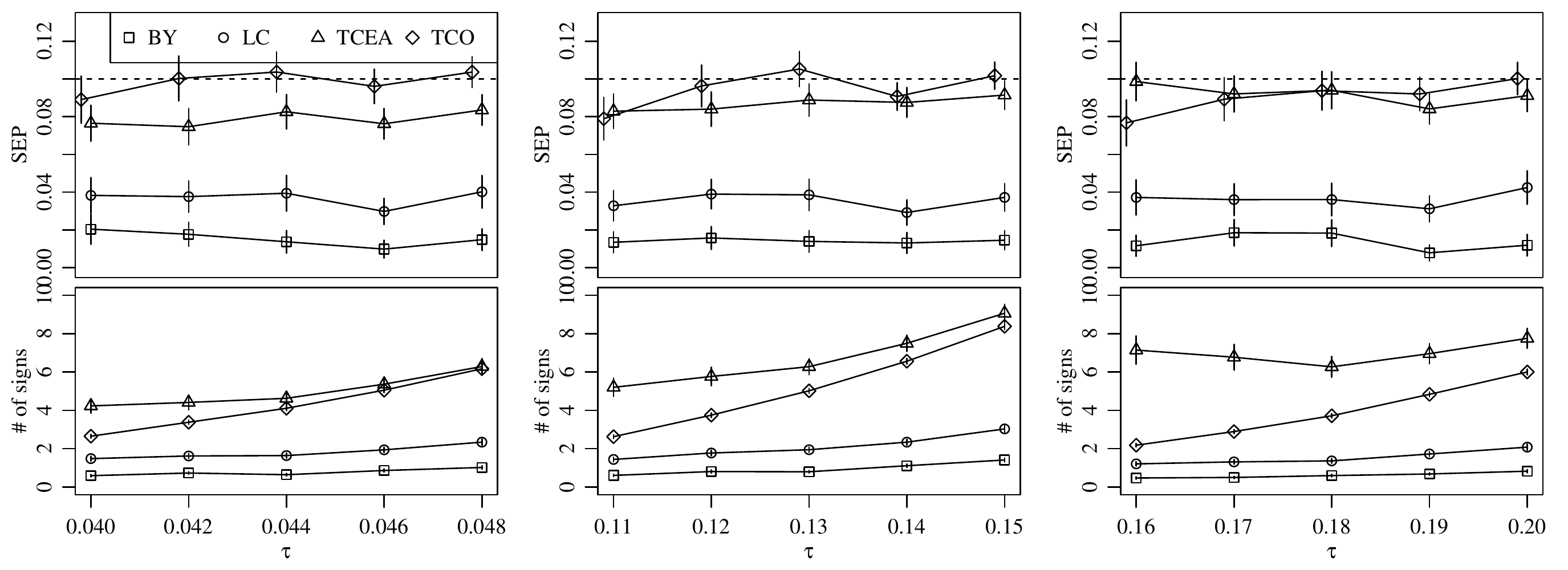}
	\caption{Comparison of the four procedures under the same settings as in Figure \ref{fig:ald} but $m = 100$. }
	\label{fig:aldss}
\end{figure}

Finally, we study the situation when $G$ is a spike and slab distribution. The spike is a unimodal distribution with mean zero and small variance, and the slab is a uniform distribution.  For two asymmetric cases 
($q\in\{0.1,0.3\}$) 
the slab is the uniform distribution on (2,4). For the symmetric case ($q = 0.5$), the slab is the uniform  distribution on  $(-4, -2) \cup (2, 4)$.  
In each case, 
the proportion of $\theta_i$'s that are sampled from the slab is $1 \%$. Comparisons of the three procedures and TCO  are summarized in Figure \ref{fig:ss}. As expected, the LC procedure overall has better performance than the BY and TCEA procedures. As the variance of the spike grows larger, the differences  between the $\theta$-values sampled from the spike and the $\theta$-values sampled from the slab becomes smaller, and the multimodal spike and slab distribution becomes closer and closer to a unimodal distribution that can be well-represented by a member of the asymmetric Laplace family. In such scenarios, TCEA does well in terms of maintaining MSER and inferring signs.  

%Both LC and BY procedures are affected less by the number of experiments $m$. One interesting result is that TCEA infers many more signs than the other procedures. According to our simulation settings, there should be 5 of the $\theta_i$'s that have relative large values, and TCO and LC are doing well in inferring signs of these 5 $\theta_i$'s. While TCEA adapts to the shape of the empirical distribution of $\theta_i$'s, and infers signs for some $\theta_i$'s that are sampled from the ``spike".  

\begin{figure}[!h]
	\centering
	\includegraphics[width=0.95\linewidth]{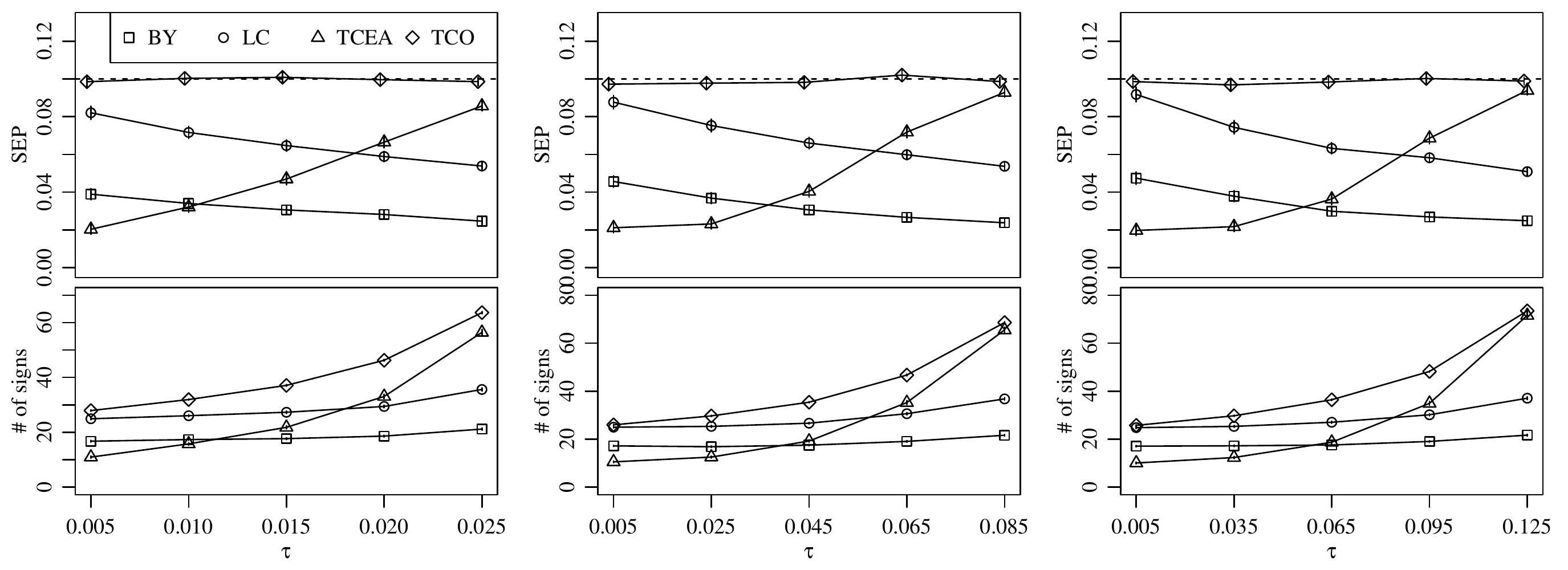}
	\caption{Comparisons of the three procedures when $m=5000$ and under a spike and slab distribution for the $\theta_i$'s. From left to right, the spike is sampled from an asymmetric Laplace distribution with $q = 0.1, 0.3, 0.5$, respectively. %1\% of the $\theta_i$'s comes from the slab. For asymmetric cases, the slab is sampled from Unif(2,4). For the symmetric case ($q = 0.5$), the slab is uniformly sampled from $(-4, -2) \cup (2, 4)$. }
}
	\label{fig:ss}
\end{figure}

%\begin{figure}[!h]
%	\centering
%	\includegraphics[width=0.95\linewidth]{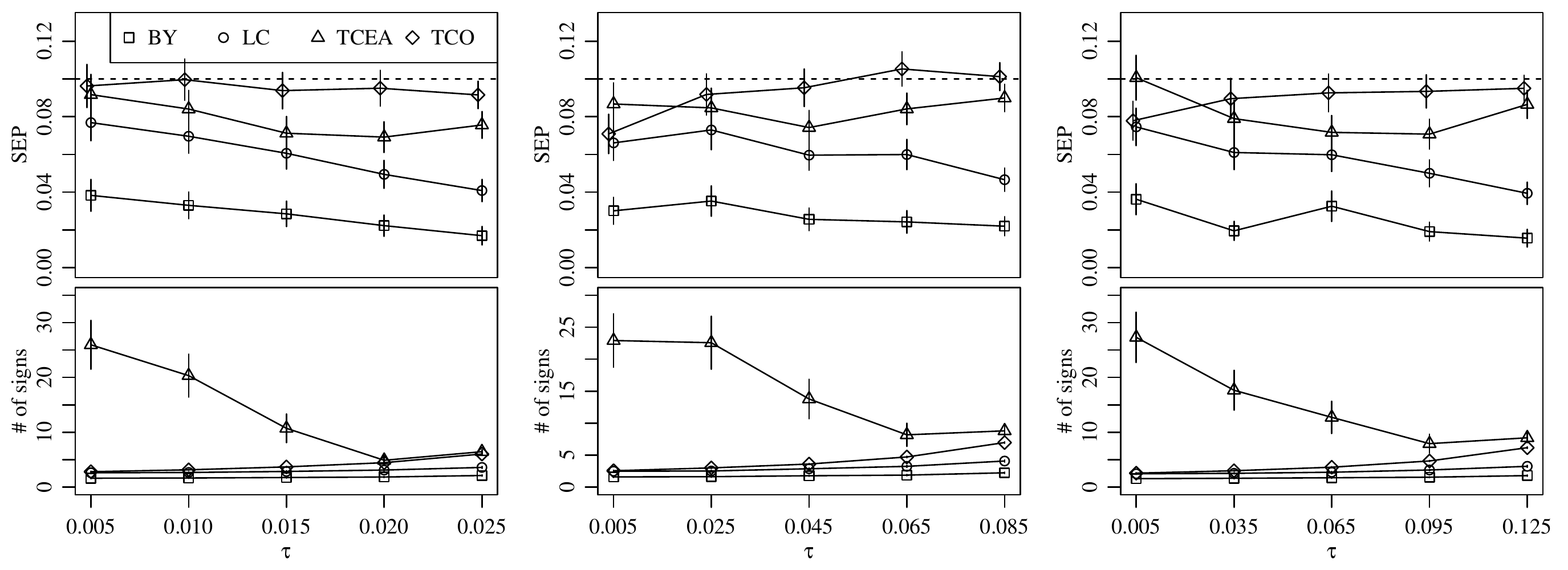}
%	\caption{Comparison of the four procedures under the settings with $g$ being ``spike and slab'' distribution. Same setting as Figure \ref{fig:ss} with smaller $m = 500$. }
%	\label{fig:sss}
%\end{figure}

\section{MSER and MSDR Optimization Subject to Type I Error Control}
We have discussed controlling MSER under a prespecified level by choosing an appropriate significance threshold. In this section, we study the relationship between MSER and the shape of the acceptance region when the level $\alpha$ for the experimentwise type I error rate is held fixed.  We show how to minimize the MSER while maintaining the experimentwise type I error rate. \citet{storey2007optimal} has proposed a general framework for maximizing the statistical power of a test while maintaining the experimentwise type I error rate. \citet{wasserman2006weighted} and \citet{Dobriban2015} studied a weighted Bonferroni method to control family-wise type I error rate while maximizing the power.  As illustrated in \citet{gelman2014beyond} and  \citet{owen2016confidence}, a high sign error rate occurs when the error variance is large compared to the true effect size. We show that other than the error variance, the shape of the acceptance region is another crucial factor in determining the sign error rate.

In addition to MSER, we define the Marginal Sign Discovery Rate (MSDR) as $
	\text{MSDR} = \Pr(R_1 = 1)$.
This quantity measures the expected proportion of the number of experiments with a sign inferred among all of the experiments since 
\begin{equation*}
	\text{MSDR} = \Pr(R_1 = 1) =   \frac{\sum_{i=1}^{m} \Pr(R_i=1)}{m} = \frac{\sum_{i=1}^{m} \Exp{\textbf{1}(R_i=1)}}{m} =  \frac{\Exp{\sum_{i=1}^{m} \textbf{1}(R_i=1)}}{m} = \text{E}\left[ \frac{R}{m} \right].
\end{equation*}

Both MSER and MSDR are affected by the acceptance region of the test. The usual acceptance region for each $H_i$ is $A = (\Phi^{-1}(\alpha/2),  \Phi^{-1}(1-\alpha/2))$, which corresponds to the uniformly most accurate unbiased (UMAU) test. Following the ideas of \citet{yu2016adaptive}, we can construct a class of acceptance regions that corresponds to all level $\alpha$ two-sided tests $	A(\alpha,s) = \{Y_i:  \Phi^{-1}(\alpha s)<Y_i< \Phi^{-1}(1-\alpha(1-s)) \}$,
where $s \in (0,1)$ is a constant. Thus even if the level $\alpha$ is fixed, we can change the acceptance region by varying its endpoints. When $s<1/2$, the acceptance region tends to cover more negative observations and less positive observations. When $s>1/2$, the acceptance region tends to cover more positive observations and less negative observations. As $s \to 0$ or $1$, the two-sided test converges to a one-sided test with an acceptance region of either $( \Phi^{-1}(\alpha), \infty)$ or $(-\infty,  \Phi^{-1}(1-\alpha)) $. We now examine which $s$ value minimizes MSER and which $s$ value maximizes MSDR when the experimentwise type I error rate $\alpha$ is held fixed.  Similar to (\ref{eq:MSER}), we can express the MSER as 
\begin{equation*}
	\begin{split}
		\text{MSER}(A(\alpha,s)) &= \frac{E[B_1\textbf{1}(\theta_1 > 0)+B_2\textbf{1}(\theta_1 < 0)]}{ E[B_1 + B_2]}, \\
		\text{MSDR}(A(\alpha,s)) & = E[B_1 + B_2],
	\end{split}
\end{equation*}
where $B_1 =\Phi(\Phi^{-1}(\alpha s)-\theta)$ and $B_2=\Phi(\Phi^{-1}(\alpha(1-s))+\theta)$.

If we fix $\alpha$, MSER and MSDR can be seen as function of $s$. Under our models, we turn the minimization of MSER and maximization of MSDR into two one-parameter optimization problems: Denote
\begin{equation*}
	\begin{split}
		s^D &= \text{arg} \max_s \text{MSDR}(s) \\
		s^E &= \text{arg} \min_s \text{MSER}(s).
	\end{split}
\end{equation*}

Interestingly the UMAU procedure, where $s=  s^U = 1/2$, does not always maximize the expected power, and the $s$ that maximizes the MSDR does not necessarily minimizes the MSER, vice-versa. We use a simple numerical example to illustrate this. Suppose $\theta_i$'s are sampled from a shifted chi-square distribution $\chi^2_3 - 3$. By numerical evaluation, the results are summarized in Table \ref{ex:dif}.
	\begin{table}[!h] \centering
		\begin{tabular}{|c|c|c|c|c|}
			\hline  & $s$ value & $A(s,0.05)$ & MSER($\%$) & MSDR \\ 
			\hline $s^U$ & 0.5 & (-3.92, 3.92) & 3.01 & 0.189 \\ 
			\hline $s^D$ & 0.683 & (-3.65, 4.30) & 2.79 & 0.193 \\ 
			\hline $s^E$ & 0.829 & (-3.45, 4.80) & 2.71 & 0.190  \\ 
			\hline 
		\end{tabular} 
		\caption{Comparison of the usual acceptance region, the acceptance region that maximizes MSDR, and the acceptance region that minimizes MSER}
		\label{ex:dif}
	\end{table}

On the other hand, \citet{storey2007optimal} noticed that when $\theta \sim N(0, \sigma_b^2)$, the test that maximizes expected power is the UMAU test. Here we prove a more general theorem that the UMAU test actually both maximizes expected power and minimizes MSER when the distribution of $\theta$ is symmetric.
\begin{proposition}
	If $G$ is a distribution that is symmetric with respect to 0, the two-sided test that maximizes MSDR and minimizes the MSER is the UMAU test, i.e. $s^D = s^E = 1/2$. 
	\label{theorem:mser}
\end{proposition}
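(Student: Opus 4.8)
The plan is to work with the three functions of $s$ hidden in the definitions: the expected number of sign errors $N(s)=E[B_1\textbf{1}(\theta>0)+B_2\textbf{1}(\theta<0)]$, the expected number of correct sign inferences $C(s)=E[B_1\textbf{1}(\theta<0)+B_2\textbf{1}(\theta>0)]$, and their sum $D(s)=E[B_1+B_2]=\text{MSDR}(s)$. Since $\text{MSER}(s)=N(s)/D(s)=N(s)/(N(s)+C(s))$, both claims reduce to locating extrema of these functions. I would show that, under the symmetry of $G$, each of $N,C,D$ is symmetric about $s=1/2$, and that $N$ is convex in $s$ while $C$ and $D$ are concave in $s$. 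Granting this, a midpoint inequality for a symmetric convex/concave function gives $N(1/2)\le N(s)$, $C(1/2)\ge C(s)$ and $D(1/2)\ge D(s)$ for all $s$, which yields $s^D=1/2$ directly and $s^E=1/2$ after a short monotonicity argument.

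The analytic core is a single second-derivative computation. Writing $u=\alpha s$ and $\phi=\Phi'$, the identity $\phi'(z)=-z\phi(z)$ together with the inverse-function rule gives
\[
\frac{d^2}{du^2}\,\Phi\big(\Phi^{-1}(u)-\theta\big)=\frac{\theta\,\phi\big(\Phi^{-1}(u)-\theta\big)}{\phi\big(\Phi^{-1}(u)\big)^2},
\]
so this map is convex when $\theta>0$ and concave when $\theta<0$. Applying this termwise (using $\Phi^{-1}(\alpha(1-s))+\theta=\Phi^{-1}(\alpha-u)-(-\theta)$ and the affine change $u\mapsto\alpha-u$) shows that each integrand defining $N$ is convex in $s$ for every $\theta$ — the indicator selects exactly the sign of $\theta$ that makes the relevant term convex and is zero otherwise — and likewise each integrand defining $C$ is concave; taking expectations over $G$ preserves this, so $N$ is convex and $C$ is concave, with no appeal to symmetry. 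For $D$ I would instead pair $\theta$ with $-\theta$: under the symmetry of $G$, $D(s)=E\big[\tfrac12\big((B_1+B_2)|_{\theta}+(B_1+B_2)|_{-\theta}\big)\big]$, and the paired integrand equals $w\big(\Phi^{-1}(u)\big)+w\big(\Phi^{-1}(\alpha-u)\big)$ with $w(x)=\Phi(x-\theta)+\Phi(x+\theta)$. The same computation gives $\frac{d^2}{du^2}w(\Phi^{-1}(u))=\theta[\phi(\Phi^{-1}(u)-\theta)-\phi(\Phi^{-1}(u)+\theta)]/\phi(\Phi^{-1}(u))^2\le 0$ whenever $\Phi^{-1}(u)\le 0$, i.e.\ for $u\le 1/2$. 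Since $\alpha<1/2$, both arguments $u=\alpha s$ and $\alpha-u$ lie in $(0,\alpha)\subset(0,1/2)$, so each paired term is concave and hence $D$ is concave.

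To place the symmetry point at $s=1/2$, I would substitute $\theta\mapsto-\theta$ in the expectations and use the reflection identities $B_1(1-s,-\theta)=B_2(s,\theta)$ and $B_2(1-s,-\theta)=B_1(s,\theta)$, which follow directly from the definitions of $B_1,B_2$. Under the symmetry of $G$ these give $N(s)=N(1-s)$ and $C(s)=C(1-s)$, hence $D(s)=D(1-s)$. Combining each reflection with the convexity/concavity just established, the midpoint inequality ($f(1/2)\le\tfrac12[f(s)+f(1-s)]=f(s)$ for convex $f$, and the reverse for concave $f$) yields $N(1/2)\le N(s)$, $C(1/2)\ge C(s)$ and $D(1/2)\ge D(s)$. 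The MSDR claim $s^D=1/2$ is then immediate. For MSER, I would note that $t\mapsto t/(t+c)$ is increasing in $t$ and $c\mapsto N/(N+c)$ is decreasing in $c$, so
\[
\text{MSER}(s)=\frac{N(s)}{N(s)+C(s)}\ \ge\ \frac{N(1/2)}{N(1/2)+C(s)}\ \ge\ \frac{N(1/2)}{N(1/2)+C(1/2)}=\text{MSER}(1/2),
\]
giving $s^E=1/2$.

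The main obstacle is the MSER claim, because it is a ratio and the symmetry point is degenerate: both $N'(1/2)$ and $D'(1/2)$ vanish, so a naive first-order analysis is inconclusive and a direct second-derivative-of-$N/D$ argument is unpleasant. The idea that sidesteps this is the decomposition $D=N+C$ into errors and correct detections; once one sees that $N$ is convex and $C$ is concave — so that the numerator is minimized and the correct-detection term is maximized \emph{simultaneously} at the symmetric point — the monotonicity of $N/(N+C)$ finishes the argument with no ratio calculus. A secondary subtlety is that the per-$\theta$ power $B_1+B_2$ is not concave on its own (it is convex where $\theta>0$), so the concavity of MSDR genuinely requires pairing $\theta$ with $-\theta$, and this is precisely where the symmetry of $G$ and the restriction $u<1/2$ (equivalently $\alpha\le 1/2$) enter.
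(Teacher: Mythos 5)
Your proof is correct, and while it runs on the same analytic engine as the paper's --- the curvature computation showing $u \mapsto \Phi(\Phi^{-1}(u)-\theta)$ is convex for $\theta>0$ and concave for $\theta<0$ (the paper's equivalent is the $H''(s)<0$ calculation), plus the $\theta \leftrightarrow -\theta$ pairing under symmetry of $G$ --- it is organized along a genuinely different decomposition. The paper symmetrizes everything at once: it pairs $\theta$ with $-\theta$ inside the MSDR integral, shows the paired integrand $H(s)$ has a critical point at $s=1/2$ and is globally concave, and then dismisses the MSER numerator with the remark that it can be handled ``using the same technique,'' concluding via $\text{MSER}=\gamma/\text{MSDR}$. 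You instead split $\text{MSDR}=N+C$ into errors plus correct detections, observe that $N$ is convex and $C$ is concave \emph{termwise in $\theta$}, with no pairing and no symmetry (the indicators select exactly the curvature-favorable term), reserve the pairing for $D$ alone where it is genuinely needed, locate the extrema by reflection identities plus the midpoint inequality rather than by a first-order condition, and finish MSER by monotonicity of $t\mapsto t/(t+c)$ in each argument. This buys two things: it makes fully rigorous the step the paper only gestures at (your $N(1/2)\le N(s)$ is precisely the claim that $\gamma$ is minimized at $s=1/2$, and your argument shows it needs only termwise convexity plus the reflection $N(s)=N(1-s)$), and it isolates exactly where the hypotheses enter (symmetry of $G$ only in the reflections and in the concavity of $D$; the restriction $\alpha\le 1/2$, which the paper also uses implicitly when it asserts $\Phi^{-1}(\alpha s)<0$, only in the $D$ step). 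The paper's route is more compact; yours is more modular and self-contained, and as a minor simplification you could even have skipped the detour through $C$ in the last display, since $N(1/2)\le N(s)$ and $D(1/2)\ge D(s)$ already give $\text{MSER}(1/2)=N(1/2)/D(1/2)\le N(s)/D(s)=\text{MSER}(s)$, which is the inequality the paper implicitly invokes.
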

Thus in applications where $\alpha$ is held fixed, if we believe that the distribution of the $\theta_i$'s is symmetric, we should use the usual acceptance region. In situations where we suspect this distribution to be asymmetric, then using either $S^D$ or $S^E$ can lead to a test with either higher MSDR or lower MSER. However, identifying $S^D$ or $S^E$ requires $G$ to be known. Similar to the TCE procedure, in practice we replace $G$ with an estimate $\hat G$ and obtain empirical estimates $\widehat{\text{MSDR}}$  and $\widehat{ \text{MSER}}$, and then obtain $S^D$ or $S^E$ by maximizing $\widehat{ \text{MSDR}}$ or minimizing $\widehat{ \text{MSER}}$.

\section{Discussion}
In this article, we use the MSER as a measure of sign errors in multiple testing settings. We proposed two types of procedures to control MSER, loose control procedure and tight control procedure. Loose control procedure can be  conservative but is robust to the distribution of the 
$\theta_i$'s, while the tight control  procedure is more  powerful but assumes the distribution of $\theta_i$'s is a member of a known parametric model.

The loose control procedure proposed in this paper is closely related to the BY procedure. Unlike the derivation for the BY procedure, we derive the LC procedure from the perspective of controlling the MSER, which is a quantity measuring the probability of making a sign error under a hierarchical model. We assume that there are no ``true nulls'' in this paper, because in many applications true nulls do not exist. By assuming no true nulls, the loose control procedure we derived is more powerful than the BY procedure in terms of the number of inferred signs. If it is believed that the true nulls do exist, the loose control procedure can still control the SER, although control over MSER depends on how we define a sign error when $\theta_i = 0$. If we define that when $\theta_i=0$, either claiming $\theta$ is positive or negative is correct, the loose control procedure stays the same as proposed in this paper. If we define that when $\theta_i=0$, either claiming $\theta$ is positive or negative is wrong, then the BY procedure should be used since it also controls the mixed directional FDR, where any sign declaration of $\theta_i =0$ is considered as a sign error.

%should be modified according to the definition of a sign error when $\theta=0$. If we define that when $\theta=0$, either claiming $\theta$ is positive or negative is correct, LC procedure stays the same as proposed earlier in this paper. If we define that when $\theta=0$, either claiming $\theta$ is positive or negative is wrong, LC procedure will be modified to be of similar form as BY procedure.  However, one should note that tight control procedure is not affected by whether there are ``true nulls'' or not. Though different model or deconvolution technique should be used for the estimation of $g$ function in the case where ``true nulls'' are believed to exist.

We also discussed varying the endpoints of the acceptance region to reduce MSER and increase MSDR when the type I error rate is fixed. This can be combined with the tight control procedure, leading to a new procedure: Choose $\alpha$ and $s$ such that 
\begin{equation*}
	\begin{split}
		& (\alpha,s) = \arg \max_{(\alpha, s)} \widehat{MSDR}(A(\alpha,s)) \\
		& \text{such that\quad} \widehat{\text{MSER}}(A(\alpha,s)) < \alpha_S.
	\end{split}
\end{equation*}
Given an estimate $\hat G$ of $G$, the solution for $(\alpha, s)$ can be obtained numerically. This procedure can potentially increase the power in inferring signs. However, the performance of this procedure is more unstable since the optimization task here is more complicated.

\section*{Acknowledgment} This research was partially supported by
NSF grant DMS-1505136. 

\section*{Appendix}

\begin{proof}[Proof of  Lemma \ref{lemma:bi}] 
	Note that $(Y_1, \theta_1),\ldots, (Y_m, \theta_m)$ are an i.i.d sample from the hierarchical  model (\ref{m1}) and (\ref{m2}). For $H_i$,  $\forall i \in \{1,..., m\} $, given that it is rejected, the probability of making a sign error is $\Pr(E_i = 1|R_i =1)$, which is MSER as specified in (\ref{MSER}).  Given that $R=r$ hypotheses are rejected, the total number of sign errors should follow a binomial distribution, i.e.
		$E\;| R=r \sim Bi(r,\text{MSER})$.
	Thus $
	R \cdot \text{SEP}\;| R=r \sim Bi(r,\text{MSER})$.
\end{proof}

\begin{proof}[Proof of  Proposition \ref{coro:septomser}]
We just need to show that $\text{SEP} - \text{MSER} \to 0$ in probability, which is to show $\text{SEP}  -\Exp{\text{SEP}} +\Exp{\text{SEP}} - \text{MSER} \to 0 $ in probability. Since $\Exp{\text{SEP}} = \text{MSER} \cdot \Pr(R>0) = \text{MSER} \cdot (1-\Pr(R_1=0)^m)$, we have $\Exp{\text{SEP}} \to \text{MSER}$ in probability as $m \to \infty$ (note $\Pr(R_1=0)<1$ in our setting). Now we just need to show that $\text{SEP} - \Exp{\text{SEP}} \to 0$ in probability, which can be done by showing $\Exp{(\text{SEP} - \Exp{\text{SEP}})^2} \to 0$. We have
\begin{equation*}
\begin{split}
\Exp{(\text{SEP} - \Exp{\text{SEP}})^2} & = \Var{\text{SEP}}  =  \Var{\Exp{\text{SEP}|R}} + \Exp{\Var{\text{SEP}|R}} \\
& = \Var{\text{MSER} \cdot \textbf{1}(R>0)} +  \Exp{\frac{R\cdot \text{MSER}(1-\text{MSER})}{R^2} \textbf{1}(R>0)} \\
& = \text{MSER}^2\cdot \Pr(R>0)(1- \Pr(R>0)) +  \text{MSER}(1-\text{MSER}) \cdot \Exp{\frac{1}{R}\textbf{1}(R>0)}.
\end{split}
\end{equation*}
The first part goes to 0 because $\Pr(R>0) \to 1$ as $m \to \infty$. The second part goes to 0 because $R$ follows a binomial distribution $Bi(m, \Pr(R_1=1))$, and 
\begin{equation*}
\begin{split}
\Exp{\frac{1}{R}\textbf{1}(R>0)} & < \Exp{\frac{2}{R+1}\textbf{1}(R>0)} < 2  \Exp{\frac{1}{R+1}\textbf{1}(R>0)} \\ &= 2  \Exp{\frac{1}{R+1}} - 2 \Exp{\frac{1}{0+1}\textbf{1}(R=0) }  \\ &= \frac{2}{(m+1)\Pr(R_1 = 1)}\cdot (1-(1-\Pr(R_1=1))^{m+1}) - 2\Pr(R=0) \to 0 
\end{split}
\end{equation*}
as $m \to \infty$. Therefore, $\text{SEP} - \text{MSER} \to 0$ in probability. 
\end{proof}

%\noindent   \textbf{Lemma} \ref{lemma:ineq} 

%\begin{proof}	
% The expectation of \text{SEP} is $ESEP = E[SEP]$. Given $R$, when $R>0$, $E[SEP|R] = \text{MSER}$ by \ref{lemma:bi}. When $R= 0$, $E[SEP|R] = 0$ due to definition. Therefore 
%  \begin{equation}
%  \begin{split}
%  ESEP &= E[SEP\textbf{1}(R>0) + SEP\textbf{1}(R=0)] \\
%  &= E[E[SEP\textbf{1}(R>0) + SEP\textbf{1}(R=0) |R]] \\ 
%  &= E[E[SEP\textbf{1}(R>0)|R]] +E[E[ SEP\textbf{1}(R=0) |R]] \\
%  & = E[\textbf{1}(R>0) E[SEP|R, R>0]] + E[\textbf{1}(R=0) E[ SEP |R=0]] \\
%  & = E[\textbf{1}(R>0) \text{MSER} ] + 0 \\
%  & = \Pr(R>0)\text{MSER}
%  \end{split}
%  \end{equation}

% Also, $\Pr(R>0) = 1-(\Pr(R_1=0))^m$. Notice that $\Pr(R_1=0) < 1$, unless the acceptance region is an empty set, which is not valid. Thus $\Pr(R>0) \to 1$ as $m \to \infty$, and $ESEP \to \text{MSER}$. 
% \end{proof}

Before proving Proposition \ref{prop:lc} and Proposition \ref{prop:lcsmall}, we first prove the Lemma below.
\begin{lemma}
	Let $A(\alpha,s) = \{y: \Phi^{-1}(\alpha s) < y < \Phi^{-1}(1-\alpha (1-s))$. Let 
	\begin{equation*}
		\gamma(A(\alpha,s)) = \text{E}_G[B_1(A(\alpha,s))\textbf{1}(\theta > 0)+B_2(A(\alpha,s))\textbf{1}(\theta < 0)], 
	\end{equation*}
	we have that $\gamma(A(\alpha,s)) \leq \alpha s \pi_0 +\alpha (1-s) (1-\pi_0)$,
	where $\pi_0 = \Pr(\theta > 0)$.
	\label{lemma:lc}
\end{lemma}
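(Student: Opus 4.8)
The plan is to bound the two integrands $B_1 \textbf{1}(\theta>0)$ and $B_2 \textbf{1}(\theta<0)$ pointwise in $\theta$ using nothing more than the monotonicity of the standard normal CDF $\Phi$, and then integrate against $G$. Recall from the expression preceding the lemma that for the acceptance region $A(\alpha,s)$ we have $B_1 = \Phi(\Phi^{-1}(\alpha s) - \theta)$ and $B_2 = \Phi(\Phi^{-1}(\alpha(1-s)) + \theta)$. The whole argument rests on the observation that each $B_j$, restricted to the sign of $\theta$ on which it actually appears in $\gamma$, is maximized at $\theta = 0$.

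First I would treat the positive-$\theta$ term. On the event $\theta>0$ the argument $\Phi^{-1}(\alpha s)-\theta$ is strictly smaller than $\Phi^{-1}(\alpha s)$, so monotonicity of $\Phi$ gives $B_1 = \Phi(\Phi^{-1}(\alpha s)-\theta) \le \Phi(\Phi^{-1}(\alpha s)) = \alpha s$; hence $B_1 \textbf{1}(\theta>0) \le \alpha s\, \textbf{1}(\theta>0)$ for every $\theta$. By the symmetric computation, on $\theta<0$ the argument $\Phi^{-1}(\alpha(1-s))+\theta$ is strictly smaller than $\Phi^{-1}(\alpha(1-s))$, so $B_2 \le \Phi(\Phi^{-1}(\alpha(1-s))) = \alpha(1-s)$ and thus $B_2 \textbf{1}(\theta<0) \le \alpha(1-s)\, \textbf{1}(\theta<0)$.

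Adding these two pointwise bounds and taking $\text{E}_G$ of both sides, linearity of expectation gives $\gamma(A(\alpha,s)) \le \alpha s\, \text{E}_G[\textbf{1}(\theta>0)] + \alpha(1-s)\, \text{E}_G[\textbf{1}(\theta<0)]$. Since $G$ is absolutely continuous we have $\Pr(\theta=0)=0$, so $\text{E}_G[\textbf{1}(\theta>0)] = \Pr(\theta>0) = \pi_0$ and $\text{E}_G[\textbf{1}(\theta<0)] = 1-\pi_0$, which yields exactly $\gamma(A(\alpha,s)) \le \alpha s\,\pi_0 + \alpha(1-s)(1-\pi_0)$, as claimed.

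I do not expect a genuine obstacle here: the result is an immediate consequence of monotonicity plus the cancellation $\Phi(\Phi^{-1}(\cdot)) = (\cdot)$. The only points deserving care are getting the direction of each inequality correct (each $B_j$ is decreasing in $|\theta|$ on the sign region where it appears, so its supremum over that region is attained in the $\theta\to 0$ limit) and noting that the absolute continuity of $G$ is precisely what makes the $\theta=0$ boundary contribute nothing, so that $\text{E}_G[\textbf{1}(\theta>0)]$ and $\text{E}_G[\textbf{1}(\theta<0)]$ sum to one.
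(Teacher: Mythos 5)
Your proof is correct, and it takes a genuinely different (and simpler) route than the paper's. The paper establishes the same two component bounds, $\gamma_1 \le \alpha s \pi_0$ and $\gamma_2 \le \alpha(1-s)(1-\pi_0)$, by integration by parts: it transfers the derivative onto $G$, obtaining for instance $\gamma_1 = -\alpha s(1-\pi_0) + \int_0^\infty \phi(\Phi^{-1}(\alpha s)-\theta)\,G(\theta)\,d\theta$, and then bounds the CDF by $G(\theta)\le 1$; the factors $\pi_0$ and $1-\pi_0$ emerge from the boundary terms at $\theta=0$. You instead bound the integrands pointwise --- $B_1 \le \Phi(\Phi^{-1}(\alpha s)) = \alpha s$ on $\{\theta>0\}$ and $B_2 \le \Phi(\Phi^{-1}(\alpha(1-s))) = \alpha(1-s)$ on $\{\theta<0\}$, by monotonicity of $\Phi$ --- and let the indicators produce $\pi_0$ and $1-\pi_0$ upon integration. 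Both arguments are elementary and yield exactly the same inequality, but yours avoids the integration-by-parts bookkeeping (checking that the boundary term vanishes at infinity, evaluating $G(0)=1-\pi_0$), and it makes the source of slack transparent: the bound is loose precisely to the extent that $G$ puts mass away from $0$, which is also why the LC procedure is conservative outside spike-and-slab settings. Your closing observation about absolute continuity is the right one: it is what lets you write $\Pr(\theta<0)=1-\pi_0$ exactly (without it you would still get the inequality, since $\Pr(\theta<0)\le 1-\pi_0$, so nothing breaks either way).
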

\begin{proof} Under the hierarchical model we have, 
	\begin{equation*}
		\gamma(A(\alpha,s)) = \text{E}_G[\Phi(\Phi^{-1}(\alpha s) - \theta)\textbf{1}(\theta > 0) + \Phi(\Phi^{-1}(\alpha (1-s)) + \theta)\textbf{1}(\theta < 0)]. 
	\end{equation*}
	Denote $\gamma(A(\alpha,s)) = \gamma_1 + \gamma_2$ where $
		\gamma_1 =  \text{E}_G[\Phi(\Phi^{-1}(\alpha s) - \theta)\textbf{1}(\theta > 0)]$, and $
		\gamma_2 =  \text{E}_G[\Phi(\Phi^{-1}(\alpha (1-s)) + \theta)\textbf{1}(\theta < 0)]$. 
	Suppose the probability density function of $G$ is $g$,for $\gamma_1$ we have
	\begin{equation}
		\begin{split}
			\gamma_1 = & \text{E}_G[\Phi(\Phi^{-1}(\alpha s) - \theta)\textbf{1}(\theta \geq 0)] \\
			& = \int_{0}^{\infty} \Phi(-\theta + \Phi^{-1}(\alpha s)) g(\theta) d\theta \\
			& = \Phi(-\theta + \Phi^{-1}(\alpha s))G(\theta) |^{\infty}_{0} + \int_{0}^{\infty} \phi(-\theta + \Phi^{-1}(\alpha s)) G(\theta)d\theta \\
			& = -\alpha s (1-\pi_0)  + \int_{0}^{\infty} \phi(-\theta + \Phi^{-1}(\alpha s)) G(\theta)d\theta \\
			& \leq -\alpha s (1-\pi_0) + \int_{0}^{\infty} \phi(-\theta + \Phi^{-1}(\alpha s)) d\theta \\
			& = -\alpha s (1-\pi_0)  + \alpha s = \alpha s \pi_0
		\end{split}
	\end{equation}
	For $\gamma_2$ we have
	\begin{equation}
		\begin{split}
			\gamma_2 = & \text{E}_G[\Phi(\Phi^{-1}(\alpha (1-s)) + \theta)\textbf{1}(\theta \leq 0)] \\
			& = \int_{-\infty}^0 \Phi(\theta + \Phi^{-1}(\alpha 
			(1- s)) )g(\theta) d\theta \\
			& = \Phi(\theta + \Phi^{-1}(\alpha (1- s)))G(\theta) |_{-\infty}^{0} -\int^{0}_{-\infty} \phi(\theta + \Phi^{-1}(\alpha (1- s))) G(\theta)d\theta \\
			& = \alpha (1- s)(1-\pi_0)  -\int^{0}_{-\infty} \phi(\theta + \Phi^{-1}(\alpha (1- s))) G(\theta)d\theta \\
			&\leq \alpha (1- s)(1-\pi_0) 
		\end{split}
	\end{equation}
	Therefore $E(A(\alpha,s)) = \gamma_1 + \gamma_2 \leq \alpha s \pi_0 +\alpha (1-s) (1-\pi_0)$.
\end{proof}

%\noindent \textbf{Proposition} \ref{prop:lc}
%\begin{proof}
%	By Lemma \ref{lemma:lc} we know that for usual acceptance region ($s= 1/2$),
%	\begin{equation}
%	\gamma \leq \pi_0 \alpha/2  +\alpha (1-\pi_0)/2 = \alpha/2.
%	\label{ine:me}
%	\end{equation} 
%	Also note that $\text{MSER} = \gamma/\beta$, where $\beta = \Pr(R_1 = 1)$. Thus $
%	\gamma \leq \alpha_{l}/2 \leq \alpha_S R/m$, and we have $\gamma / (R/m) \leq \alpha_S$. Asymptotically, we have $R/m \to \Pr(R_1 = 1)=\beta$ in probability by Law of Large Number, thus we have $\gamma / \beta \leq \alpha_S$ using loose control procedure.
%\end{proof}

\begin{proof}[Proof of Proposition \ref{prop:lcsmall}]
 Denote $R^t$ as the total number of rejections. We have 
	\begin{equation*}
		\Exp{R^t/m} = \Exp{\sum \textbf{1}(R_j=1)}/m =\sum \Pr(R_j = 1) /m = \Pr(R_i =1),
	\end{equation*}
	where the last step is because of the exchangeability of the model. Again, we write $\text{MSER} = \gamma/\beta$, where $\gamma = \Pr(E_i = 1, R_i = 1)$ and $\beta =  \Pr(R_i = 1)$. Since $\alpha_{l}^i$ is independent of $Y_i$, and by Lemma \ref{lemma:lc} and letting $s=1/2$, we have
	\begin{equation*}
		\Pr(E_i = 1, R_i = 1 | \alpha_{l}^i) \leq  \alpha_{l}^i/2 \leq \alpha_S((R(\alpha_{l}^i)-1)\vee 0)/m \leq \alpha_S R^t/m.
	\end{equation*}
	Thus $\gamma = \Exp{\Pr(E_i = 1, R_i = 1 | \alpha_{l}^i)} \leq \alpha_S \Exp{R^t/m} =\alpha_S\beta$. Therefore $\text{MSER} \leq \alpha_S$.
\end{proof}

\begin{proof}[Proof of Proposition \ref{lemma:lcfdr}]
This Proposition follows from \citet{Benjamini2005} Theorem 1 and Corollary 3. To modify the proof for LC procedure, we should replace the $kq/m$ in equation (4) in \citet{Benjamini2005} with $2kq/m$. Then it is easy to see that the SER can be controlled under $q$, which is the $\alpha_S$ we have in this paper. Since LC is more conservative than LC, NLC also controls SER below $\alpha_S$.
\end{proof}
 
\begin{proof}[Proof of  Proposition \ref{prop:lc}]
	This is implied by  Proposition \ref{lemma:lcfdr}. Note that when $m \to \infty$, $\text{SER} - \text{MSER} \to 0$ in probability according to the proof of Proposition \ref{coro:septomser}, hence MSER $<\alpha_s$ in probability. 
\end{proof}

\begin{proof}[Proof of Proposition  \ref{prop:tcetotco}]
	We first show that $\widehat{\text{MSER}} \stackrel{p} \rightarrow \text{MSER}$. Since both $B_1$ and $B_2$ are integrable and the probability density function $g_{\eta}$ of $G$ is continuous in $\eta$, $\text{E}_G[B_1+B_2] = \int (B_1+B_2) g_{\eta}(\theta) d \theta$ is a continuous function of $\eta$ and it is always nonzero. Similarly, $\text{E}_G[B_1\textbf{1}(\theta \geq 0)+B_2\textbf{1}(\theta \leq 0)]$ is a continuous function in $\eta$. Therefore, MSER is a continuous function in $\eta$. Note that the difference between MSER and $\widehat{\text{MSER}}$ is that the former uses $\eta$ and the later uses $\hat \eta$. If $\hat \eta  \stackrel{p}\rightarrow \eta$, then we have $\widehat{\text{MSER}} \to \text{MSER}$ by Continuous Mapping Theorem. 
	
	Since $\alpha_o$ is the unique solution such that $\text{MSER}(A(\alpha_o)) - \alpha_S=0$, and $\alpha^e$ is the unique solution such that $\widehat{\text{MSER}}(A(\alpha_e)) - \alpha_e = 0$, we have $\alpha_e  \stackrel{p}\rightarrow \alpha_o$ by M-estimator theory (Lemma 5.10, \citet{van1998asymptotic}).

\end{proof}

\begin{proof}[Proof of  Proposition \ref{theorem:mser}]
   We first show that $s=1/2$ maximizes the MSDR. The MSDR can be written as 
	\begin{equation*}
		\begin{split}
			\text{MSDR}(s) &= \int_{-\infty}^{\infty} (B_1(\theta,s) + B_2(\theta,s) ) g(\theta)d \theta \\
			& = \int_{-\infty}^{0} (B_1(\theta,s) + B_2(\theta,s) ) g(\theta)d \theta +\int_{0}^{\infty} (B_1(\theta,s) + B_2(\theta,s) ) g(\theta)d \theta 
		\end{split}
	\end{equation*}
	Since $g$ is symmetric,
	\begin{equation*}
		\begin{split}
			\text{MSDR}(s)
			& = \int_{0}^{\infty} (B_1(-\theta,s) + B_2(-\theta,s) ) g(\theta)d \theta +\int_{0}^{\infty} (B_1(\theta,s) + B_2(\theta,s) ) g(\theta)d \theta \\
			& = \int_{0}^{\infty} ((B_1(\theta,s) + B_2(\theta,s) +B_1(-\theta,s) + B_2(-\theta,s) )g(\theta) d\theta
		\end{split}
	\end{equation*}
	Now we prove that the integrand is maximized when $s=1/2$, which does not depend on $\theta$. Thus $\text{MSDR}(s)$ is maximized when $s=1/2$. The integrand can be written as $H(s)g(\theta)$ where
	% 	We prove it for any symmetric discrete distribution. Without loss of generality, we suppose the null is $\theta = 0$. Suppose $g$ is a distribution with support $\{...,-\theta_i,...,-\theta_1,0,\theta_1,...,\theta_i,...\}$ with $\theta_i$s all positive, and corresponding probability mass $(p_q,...,p_1,p_0,p_1,...,p_q)$. Then the expected power 
	% 	\begin{equation}
	% 	\text{MSDR}(s) = \sum p_i(B_1(\theta_i,s) + B_2(\theta_i,s) + B_1(-\theta_i,s) + B_2(-\theta_i,s) ) 
	% 	\end{equation} 	
	\begin{equation}
		H(s) = \Phi(\Phi^{-1}(\alpha s) - \theta)+\Phi(\Phi^{-1}(\alpha (1-s)) + \theta)+ \Phi(\Phi^{-1}(\alpha s) + \theta)+ \Phi(\Phi^{-1}(\alpha (1-s)) - \theta)
		\label{eq:integrand}
	\end{equation}
	Taking the derivative with respect to $s$, we have
	\begin{equation}
		\begin{split}
			H(s)^\prime &= \frac{\phi(\Phi^{-1}(\alpha s)-\theta)}{\phi(\Phi^{-1}(\alpha s))} +\frac{\phi(\Phi^{-1}(\alpha s)+\theta)}{\phi(\Phi^{-1}(\alpha s))} -\frac{\phi(\Phi^{-1}(\alpha (1-s))-\theta)}{\phi(\Phi^{-1}(\alpha (1-s)))} - \frac{\phi(\Phi^{-1}(\alpha (1-s))+\theta)}{\phi(\Phi^{-1}(\alpha (1-s)))} \\
			& = c_1(exp(\Phi^{-1}(\alpha s)\theta)+exp(-\Phi^{-1}(\alpha s)\theta) 
			-exp(\Phi^{-1}(\alpha (1-s))\theta) -exp(-\Phi^{-1}(\alpha (1-s))\theta)), 
			\label{eq:solution}
		\end{split}
	\end{equation}
	where $c_1$ is a positive constant. It's easy to see that $s=1/2$ is one solution to $H(s)^\prime = 0$. Now we show that $H(s)$ is actually concave, hence $s=1/2$ maximizes $H(s)$ for every $\theta > 0$. Therefore $s=1/2$ maximizes $\text{MSDR}(s)$. By taking derivative of $H(s)^\prime$ with respect to $s$ and rearrange, we obtain
	\begin{equation*}
		\begin{split}
			H(s)^{\prime\prime} &= c_2(exp( (\Phi^{-1}(\alpha s)+\theta)^2/2 )  
			+exp( (\Phi^{-1}(\alpha (1-s))+\theta)^2/2 )
			-exp( (\Phi^{-1}(\alpha s)-\theta)^2/2 )
			\\ & \quad \quad -exp( (\Phi^{-1}(\alpha (1-s))-\theta)^2/2 )
			),
		\end{split}
	\end{equation*}
	where $c_2$ is a positive constant. Since $\Phi^{-1}(\alpha s) < 0 $ and $\theta > 0 $ (the integral is from 0 to $\infty$), we have 
	\begin{equation*}
		|\Phi^{-1}(\alpha s)-\theta| = |\Phi^{-1}(\alpha s)| + |\theta| \geq |\Phi^{-1}(\alpha s) + \theta|.
	\end{equation*}
	Thus 
	\begin{equation*}
		exp( (\Phi^{-1}(\alpha s)+\theta)^2/2 ) -exp( (\Phi^{-1}(\alpha s)-\theta)^2/2 ) < 0.
	\end{equation*}
	Similarly 
	\begin{equation*}
		exp( (\Phi^{-1}(\alpha (1-s))+\theta)^2/2 ) -exp( (\Phi^{-1}(\alpha (1-s))-\theta)^2/2 ) < 0.
	\end{equation*}
	Therefore $H(s)^{\prime\prime} < 0$, and $s=1/2$ maximizes $\text{MSDR}(s)$.
	
	To show MSER is minimized by $1/2$, we can first show that $s=1/2$ minimizes $\gamma$, using the same technique as previous part of this proof. Then by noticing that $\text{MSER} = \gamma/\text{MSDR}$, we know $s=1/2$ minimizes MSER.
\end{proof}

\bibliographystyle{chicago}
\bibliography{refs}

\end{document}